\newcommand{\bra}[1]{\langle #1|}
\newcommand{\ket}[1]{|#1\rangle}
\newcommand{\braket}[2]{\langle #1|#2\rangle}
\newcommand{\cent}[0]{\mbox{\textcent}}
\newcommand{\dollar}[0]{\$}
\def\Sol{{\sf Sol}}
\def\Par{{\sf Par}}
\def\Ind{{\sf Ind}}
\def\even{{\sf Even}}
\def\less{{\sf Less}}
\def\EQ{{\sf EQ}}
\def\dec{{\sf dec}}
\def\span{{\sf span}}
\newtheorem{fact}{Fact}
\newtheorem{theorem}{Theorem}
\newtheorem{corollary}{Corollary}
\newtheorem{lemma}{Lemma}
\newtheorem{remark}{Remark}
\title{Quantum, Stochastic, and Pseudo Stochastic Languages \\ with Few States%
  \thanks{A~preliminary version of this work is \cite{ShY14}}}%
\author{Arseny Shur\thanks{Partially supported under the Agreement 02.A03.21.0006 of 27.08.2013 between the Ministry of Education and Science of the Russian Federation and Ural Federal University.} 
    \\
    Ural Federal University, Ekaterinburg, Russia
    \\
    arseny.shur@urfu.ru
   \\ \\
 Abuzer Yakary{\i}lmaz\thanks{Partially supported by CAPES with grant 88881.030338/2013-01,  ERC Advanced Grant MQC, and FP7 FET projects QALGO.}
    \\
    National Laboratory for Scientific Computing,  Petr\'{o}polis, RJ, Brazil
    \\
    abuzer@lncc.br
}
\date{\today}%
\begin{document}

\maketitle

\begin{abstract}

Stochastic languages are the languages recognized by probabilistic finite automata (PFAs) with cutpoint over the field of real numbers. More general computational models over the same field such as generalized finite automata (GFAs) and quantum finite automata (QFAs) define the same class. In 1963, Rabin proved the set of stochastic languages to be uncountable presenting a single 2-state PFA over the binary alphabet recognizing uncountably many languages depending on the cutpoint. In this paper, we show the same result for unary stochastic languages. Namely, we exhibit a 2-state unary GFA, a 2-state unary QFA, and a family of 3-state unary PFAs recognizing uncountably many languages; all these numbers of states are optimal. After this, we completely characterize the class of languages recognized by 1-state GFAs, which is the only nontrivial class of languages recognized by 1-state automata. Finally, we consider the variations of PFAs, QFAs, and GFAs based on the notion of inclusive/exclusive cutpoint, and present some results on their expressive power.
\\
\textbf{keywords:} stochastic languages, unary languages, quantum finite automata, generalized finite automata, probabilistic finite automata, regular languages, context-free languages
\end{abstract}

\section{Introduction}

Computation models based on real, or even complex, numbers are much more powerful then ``classical'' Turing machines. Since there is a possibility that some of these models, like the quantum model, will become physically available for experiments in the nearest future, it is quite important to know the limitations of the models. In the paper, we study the power of small probabilistic, general, and quantum automata. The two main questions are how many states is sufficient to recognize uncountably many unary languages? what languages can be recognized by one state? Similar questions were studied starting from the seminal paper by Rabin \cite{Rab63}, but not all of them are answered yet.

Our results are as follows. In Sect.~\ref{sec:card-unary}, we first show that a rotation operator implemented by a 2-state unary GFA or QFA generates uncountably many languages depending on the choice of the cutpoint. For QFAs, the result holds even for the most restricted model of such an automata, described in \cite{MC00}. This fact also allows us to answer an open question stated in \cite{YS10A}. Since 1-state unary GFAs recognize only regular languages (see Sect.~\ref{sec:one-state-pseudo} for details), the obtained bounds on the number of states are sharp. Then we turn to PFAs, where the situation differs because (i) 2-state unary PFAs recognize only regular languages  and (ii) the choice of a cutpoint for a unary PFA gives only countably many distinct languages; see \cite{Paz71}. We exhibit an uncountable set of pairs (3-state unary PFA; cutpoint) producing uncountably many different languages. Again, the bound on the number of states is sharp.

1-state PFAs and QFAs define trivial languages but the situation is completely different for GFAs. In the unary case, 1-state GFAs recognize a proper subclass of regular languages, while the set of binary languages recognized by 1-state GFAs is uncountable. In Sect.~\ref{sec:one-state-pseudo}, we introduce three classes of languages (solution, parity, and indicator languages), fully characterize the languages recognized by 1-state GFAs in terms of these classes and provide criteria of regularity and context-freeness for these languages.

In the last part of the paper (Sect.~\ref{sec:incl-excl}), we consider GFAs/QFAs/PFAs using cutpoint in a different way. Namely, either equality or non-equality is used as the acceptance condition instead of the '$>$' inequality. We prove some results on the expressive power of automata with such acceptance conditions.

\section{Background}

We denote the set of states by $ Q = \{q_1,\ldots,q_n\} $ for some $n>0$ and the input alphabet by $ \Sigma $. The left end-marker $\cent$ and the right end-marker $\dollar$ do not belong to $\Sigma$. All models in the paper read inputs from the left to the right symbol by symbol. 

A generalized finite automaton (GFA) \cite{Tur69,Paz71} $\cal G$ is a quintuple
\[
	\mathcal{G} = (Q,\Sigma,\{ A_{\sigma} \mid \sigma \in \Sigma \},v_0,f),
\]
where $A_{\sigma} \in \mathbb{R}_{|Q|\times |Q|}$ is the \emph{transition matrix} for the symbol $\sigma \in \Sigma$, $v_0\in \mathbb{R}_{|Q|\times 1}$ is the \emph{initial vector}, and $f\in \mathbb{R}_{1\times |Q|}$ is the \emph{final vector}.
For a given input $w\in \Sigma^*$, the computation of $\cal G$ can be traced by a $ |Q| $-dimensional column vector:
\[
	v_{i} = A_{w_i} v_{i-1},
\] 
where $ 1 \leq i \leq |w| $ and the accepting value of $\cal G$ on $w$ is calculated as
\[
	f_{\cal G}(w) = f v_{|w|} = f A_{w_{|w|}} A_{w_{|w|-1}} \cdots A_{w_2} A_{w_1} v_0.
\]

A probabilistic finite automaton (PFA) \cite{Rab63} is a special case of GFA where each transition matrix is (left) stochastic, $ v_0 $ is a 0-1 stochastic vector, and $f$ is a 0-1 vector. Note that the entry of 1 in $v_0$ corresponds to a state called the initial state and the entries of 1s in $f$ correspond to the states called accepting (or final) states. 

A PFA can also be defined by starting its computation in a distribution of states instead of a single state. Then any stochastic vector can serve as the initial vector. Similarly, instead of some fixed accepting states, each state contributes to the accepting probability with some weight from $ [0,1] $. Formally, we can assume that a PFA can (i) read the left end-marker ($\cent$) before reading the input for preprocessing (and so the new initial vector is $ A_{\cent} v_0 $ for a stochastic matrix $A_{\cent}$) and (ii) read the right end-marker after finishing the whole input for post-processing (and so the new final vector is $ f A_{\dollar} $ for a stochastic matrix $A_{\dollar}$). 

In the literature, there are different models of quantum finite automata (QFAs). The most general one \cite{Hir10,YS11A} can simulate PFAs exactly (see \cite{SayY15} for a pedagogical proof).
In this paper, we mainly use the most restricted model called MCQFA\footnote{MC stands for Moore and Crutchfield who introduced the model \cite{MC00}.} \cite{MC00} which is sufficient to follow most of our quantum results. 

We begin with a concise review of quantum computation. Conventionally, in quantum computation (mechanics), any vector is represented in ``ket'' notation, e.g. $ \ket{v} $. Its conjugate transpose is denoted by $\bra{v}$ and the inner product of two vectors $\bra{u}$ and $\ket{v}$ is denoted by $\braket{u}{v}$. A quantum state of a quantum system $ \mathcal{M} $ with the set of states $Q=\{q_1,\ldots,q_n\}$ is a norm-1 (column) vector in the $n$-dimensional Hilbert space ${\cal H}_n$:
\[
	\ket{v} = \left( \begin{array}{c} \alpha_1 \\ \vdots \\ \alpha_n 
\end{array}	 \right),\ \text{ where }\ \sum_{j=1}^n  | \alpha_j | ^2 = 1.
\]
The entries $\alpha_1,\ldots,\alpha_n$ are called \emph{amplitudes} of the states $q_1,\ldots,q_n$, respectively, while $ |\alpha_j|^2 $ is viewed as the probability of the system being in the state $q_j$. The quantum state containing 1 in the $j$th entry (and hence zeroes in the other entries) is denoted by $\ket{q_j}$. Clearly, $\ket{q_1},\ldots,\ket{q_n}$ form a basis of ${\cal H}_n$. 

There are two fundamental quantum operations: \emph{unitary} and \emph{measurement} operators. A unitary operator applicable to $\mathcal{M}$ is an $n \times n$ complex-valued matrix preserving the norm. Let $\ket{v}$ be a quantum state satisfying $ \braket{v}{v} = 1 $ and $ U $ be a unitary operator. The new quantum state after applying $U$ is $\ket{v'} = U \ket{v}$.

Measurement operators are used to retrieve information from quantum systems. We use simple measurement operators defined as follows. The set of states is partitioned into sets $Q_1,\ldots,Q_k$ ($k>1$) inducing the decomposition of $\mathcal{H}_n$ into the sum $ \mathcal{H} = \mathcal{H}_1 \oplus \cdots \oplus \mathcal{H}_k $ of orthogonal subspaces $ \mathcal{H}_l = \span \{\ket{q} \mid q \in Q_l\} $. A measurement operator $P$ has $k$ operation elements $P_l = \sum_{q \in Q_l}  \ket{q}\bra{q}$ and forces the system to collapse into one of $k$ quantum subsystems corresponding to the subspaces ${\cal H}_l$. We denote the outcomes of $P$ with the indices ``$1$''$,\ldots,$``$k$''. The probability of getting the outcome ``$l$'' is 
\[
	p_l=\braket{\tilde{v_l}}{\tilde{v_l}} = \sum_{q_j \in Q_l} |\alpha_j|^2, \text{ where } \ket{\tilde{v_l}} = P_l v.
\]
If $\mathcal{M}$ collapses to this subsystem ($p_l>0$), the new quantum state is obtained by normalizing $\ket{\tilde{v_l}}$:
\[
	\ket{v_l} = \frac 1{\sqrt{p_l}} \ket{\tilde{v_l}}.
\]

A quantum system can also be in more than one quantum state, called \textit{pure state}, with some probabilities:
\[
	\left\lbrace
	 	\left(p_j,\ket{v_j}\right) ~ \middle| ~ p_j \in [0,1], \braket{v_j}{v_j} = 1, 1 \leq j \leq k, \sum_{j=1}^k p_j =1 
	 \right\rbrace
\]
A convenient way of representing such a mixture, called \textit{mixed state}, is using a density matrix (also called density operator):
\[
	\rho = \sum_{j=1}^k p_j \ket{v_j} \bra{v_j}.
\]
Any density matrix ($\rho$) satisfies three properties: (i) $Tr(\rho) = 1 $, (ii) it is Hermitian, and (iii) positive semi-definite. Note that the $j^{th}$ diagonal entry gives the probability of the system being in state $ \ket{q_j} $.  

The most general quantum operator which generalizes any stochastic and unitary operator is superoperator. Formally, a superoperator consists of $l>0$ operation elements
\[
	\mathcal{E} = \{ E_1,\ldots, E_l \} 
\]
satisfying 
\[
	\sum_{j=1}^l E_j^{\dagger} E_j = I.
\]
An easy way to determine whether a given operator ($\mathcal{E}$) is superoperator is as follows. Let $E$ be the following rectangular matrix
\[
	E = \left( \begin{array}{c}
		E_1 \\ \vdots \\ E_l
\end{array}	 \right).
\]
Then, the columns of $E$ form an orthonormal set if and only if $\mathcal{E}$ is superoperator. If the quantum system is in mixed state $ \rho $, then the new state, after applying superoperator $\mathcal{E}$, is 
\[
	\rho' = \mathcal{E}(\rho) = \sum_{j=1}^l E_j \rho E_j^{\dagger}.
\]
If the measurement operator $P = \{P_1,\ldots,P_k\} $ described above is applied to the state $\rho$, the outcome ``$j$'' is obtained with probability
\[	
	p_j = Tr(P_j \rho)
\]
and the new (normalized) state, if $ p_j >0 $, becomes
\[		
	\rho_j = \frac{P_j \rho}{\sqrt{p_j}}.
\]

A general measurement operator is a superoperator $ \mathcal{E} = \{E_1,\ldots,E_l\} $ where indices ``$1$''$,\ldots,$``$l$'' are measurement outcomes. For a given mixed (or pure) state $\rho$, the probability of obtaining outcome ``$j$'', say $p_j$, can be calculated as follows:
\[
	p_j = Tr ( \tilde{\rho_j} ), \mbox{ where } \tilde{\rho_j} = E_j \rho E_j^\dagger.
\]
If outcome ``$j$'' is observed ($p_j>0$), then the system collapses to 
\[
	\rho_j = \frac{ \tilde{\rho_j}}{p_j}.
\]

A MCQFA is a quintuple
$	\mathcal{M} = (Q,\Sigma,\{ U_{\sigma} \mid \sigma \in \Sigma \},\ket{v_0},P), $
where $Q=\{q_1,\ldots,q_n\}$, $U_{\sigma}\in \mathbb{C}_{|Q|\times |Q|} $ is the unitary transition matrix for the symbol $ \sigma \in \Sigma $, $\ket{v_0}\in \{\ket{q_1},\ldots,\ket{q_n}\}$ is the initial state, and $P = \{ P_a,P_r \}$ is the measurement operator applied after reading the whole input. An input is accepted if the outcome ``a'' of $P$ is observed.
For any given input $w\in \Sigma^*$, the computation of $\cal M$ can be traced by a $ |Q| $-dimensional quantum state:
\[
	\ket{v_{i}} = U_{w_i} \ket{v_{i-1}},
\] 
where $ 1 \leq i \leq |w| $. The accepting probability of $\cal M$ on $w$ is 
\[
	f_{\cal M}(w) =  \braket{\tilde{v_{a}}}{\tilde{v_{a}}},  \mbox{ where } \ket{\tilde{v_a}} = P_a \ket{v_{|w|}}.
\]
MCQFAs can also be defined with the end-markers to perform pre- and post-processing of the input. Then the initial state can be an arbitrary quantum state $ U_{\cent} \ket{v_0} $ for a unitary operator $U_{\cent}$, and the measurement turns out to be a  general one with two outcomes, $ \{P_a U_{\dollar},P_r U_{\dollar} \} $, for a unitary $U_{\dollar}$. On the other hand, any MCQFA with both end-markers can be equivalently represented by a MCQFA with a single end-marker \cite{BP02}. Therefore, any MCQFA with both end-markers can be defined like MCQFA without end-markers except that $ \ket{v_0} $ can be an arbitrary quantum state.

A (general) quantum finite automaton (QFA) \cite{Hir10,YS11A} is a quintuple
\[	
	\mathcal{M} = (Q,\Sigma,\{ \mathcal{E}_{\sigma} \mid \sigma \in \Sigma \},\ket{v_0},P), 
\]
where $Q=\{q_1,\ldots,q_n\}$, $\mathcal{E}_{\sigma} = \{ E_{\sigma,1},\ldots,E_{\sigma,l_\sigma} \} $ is the superoperator for the symbol $ \sigma \in \Sigma $ composed by $l_\sigma$ operation elements, $ \ket{v_0}\in \{\ket{q_1},\ldots,\ket{q_n}\}$ is the initial state, and $P = \{ P_a,P_r \}$ is the measurement operator applied after reading the whole input. An input is accepted if the outcome ``a'' of $P$ is observed.
For any given input $w\in \Sigma^*$, the computation of $\cal M$ can be traced by a $ |Q| \times |Q| $-dimensional density operator (mixed state):
\[
	\rho_j = \mathcal{E}_{w_i} (\rho_{j-1}),
\] 
where $\rho_0 = \ket{q_0} \bra{q_0} $ and $ 1 \leq j \leq |w| $, and the accepting probability of $\cal M$ on $w$ is 
\[
	f_{\cal M}(w) = Tr( P_a \rho_{|w|} ).
\]

QFAs can also be defined with the end-markers to perform pre- and post-processing of the input. Then the initial state can be an arbitrary mixed quantum state $ \mathcal{E}_{\cent} ( \rho_0 ) $ for a superoperator operator $\mathcal{E}_{\cent}$, and the measurement turns out to be a general one with two outcomes,

\[
	\{ \{ P_a E_{\dollar,1},\ldots,P_a E_{\dollar,l} \}, \{ P_r E_{\dollar,1},\ldots,P_r E_{\dollar,l} \} \},
\] 
for a superoperator $\mathcal{E}_{\dollar} = \{ E_{\dollar,1},\ldots, E_{\dollar,l} \}$.

The language recognized by GFA/PFA/QFA $\cal M$ with \emph{cutpoint} $\lambda $ is defined as
\[
	L(\mathcal{M},\lambda) = \{ w \in \Sigma^* \mid f_{\cal M}(w) > \lambda \},
\]
where $ \lambda\in \mathbb{R} $ for GFAs and in $\lambda\in [0,1) $ for PFAs and QFAs. Any such language recognized by an $n$-state GFA [PFA, QFA] is called ($n$-state) pseudo stochastic [resp., stochastic, quantum automaton] language. The class names are given below:
\[
	\begin{array}{|l|l|l|}
		\hline
		\multicolumn{1}{|c|}{\textbf{model}} & \mbox{\textbf{general alphabet}} & \mbox{\textbf{unary alphabet}}
		\\ \hline \hline
		\mbox{GFA} & \sf PseudoS & \sf UnaryPseudoS 
		\\ \hline
		\mbox{PFA} & \sf S & \sf UnaryS 
		\\ \hline
		\mbox{QFA} & \sf QAL & \sf UnaryQAL 
		\\ \hline
		\mbox{MCQFA} & \sf MCL & \sf UnaryMCL
		\\ \hline
	\end{array}
\]
For class $\mathsf{C}$, one can define a new class using up to three parameters in brackets $\mathsf{C[\cent n \dollar]}$, where $\cent$ ($\dollar$) means the automaton reads the left (resp., the right) end-marker and $n$ means that the class is defined by the automata with $\le n$ states.

Unless otherwise specified, all unary languages are defined on $ \{a\} $. As usual, $^*$ and $^+$ stand for the Kleeny star and the positive iteration, respectively, $\overline{L}$ is the complement of $L$ and $ \varnothing $ is the empty language. We define $ \even= (aa)^* $ and $ \less_n = \{ a^i \mid i \leq n \} $.

\section{Cardinality of unary languages}
\label{sec:card-unary}

GFAs, PFAs, and QFAs define the same class \cite{Tur69,YS09C,YS11A}:
\begin{equation}
	\label{eq:class-equality}
	\mathsf{S} = \mathsf{PseudoS} = \mathsf{QAL} \text{ \ and \ } \mathsf{UnaryS} = \mathsf{UnaryPseudoS} = \mathsf{UnaryQAL}.
\end{equation}
Note that using end-markers does not change the classes. On the other hand, $ \sf MCL[\cent\dollar] $ and $ \sf UnaryMCL[\cent\dollar] $ are proper subsets of $\mathsf{S}$ and $ \mathsf{UnaryS} $, respectively, since they contain no finite languages except for the empty language \cite{BC01B}.

In his seminal paper \cite{Rab63}, Rabin showed that the cardinality of $\mathsf{S}$ is uncountable by exhibiting a 2-state PFA on binary alphabet. To the best of our knowledge, a similar question for unary languages has been open up to now. In this section, we answer this question positively and provide the exact state bounds. We use rotations of the unit circle as transition matrices. Let $\theta \in [0,2\pi)$ be an angle. The \emph{rotation automaton} $\mathcal{R}_\theta$ is the 2-state GFA on the alphabet $\Sigma=\{a\}$ with the initial vector 
$ \bigl( \begin{smallmatrix} 1 \\ 0 \end{smallmatrix} \bigr) $, the transition matrix 
$ R_\theta = \bigl( \begin{smallmatrix} \cos \theta & - \sin \theta  \\ \sin \theta & \cos \theta \end{smallmatrix} \bigr)$ of the operator of the counter-clockwise rotation of the complex plane by the angle $\theta$, and the final vector $ ( \begin{smallmatrix} 1& 0 \end{smallmatrix}) $. The accepting value of $\mathcal{R}_\theta$ on the input $a^k$ ($k \geq 0$) is then equal to $ \cos (k \theta) $. Note the following simple fact.


\begin{fact} \label{dense}
If $\alpha$ is an irrational number, then the sequence of accepting values of the rotation automaton $R_{\alpha\pi}$ for the words $a^k$ is aperiodic and dense in $[-1,1]$.
\end{fact}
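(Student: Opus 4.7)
The plan is to reduce both claims to the classical fact that for irrational $\alpha$, the sequence $\{k\alpha \bmod 2 \mid k \geq 0\}$ is dense in $[0,2)$ (Kronecker's theorem). Equivalently, the angles $k\alpha\pi \bmod 2\pi$ are dense in $[0,2\pi)$.

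First I would handle aperiodicity, which is the easier direction and a short contradiction argument. Suppose the sequence $\cos(k\alpha\pi)$ is eventually periodic with period $p \geq 1$, so that $\cos((k{+}p)\alpha\pi) = \cos(k\alpha\pi)$ for all sufficiently large $k$. Taking two consecutive values, say $k$ and $k+1$, and using the addition formula $\cos((k{+}p)\alpha\pi) = \cos(k\alpha\pi)\cos(p\alpha\pi) - \sin(k\alpha\pi)\sin(p\alpha\pi)$, one forces $\cos(p\alpha\pi) = 1$ and $\sin(p\alpha\pi) = 0$, i.e.\ $p\alpha\pi = 2m\pi$ for some integer $m$. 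But then $\alpha = 2m/p \in \mathbb{Q}$, contradicting irrationality. (If one prefers to avoid trigonometric identities, the same conclusion follows by observing that periodicity of $\cos(k\alpha\pi)$ forces periodicity of the rotated vectors $R_{\alpha\pi}^k \bigl(\begin{smallmatrix}1\\0\end{smallmatrix}\bigr)$, hence $R_{\alpha\pi}^p = I$, hence $p\alpha \in 2\mathbb{Z}$.)

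For density, the main input is Kronecker's theorem: when $\alpha \in \mathbb{R} \setminus \mathbb{Q}$, the orbit $\{k\alpha\pi \bmod 2\pi \mid k \geq 0\}$ is dense in $[0, 2\pi)$. Granting this, density of $\cos(k\alpha\pi)$ in $[-1,1]$ follows because the map $\theta \mapsto \cos\theta$ is continuous from $[0,2\pi)$ onto $[-1,1]$: given any target $y \in [-1,1]$, pick $\theta^* \in [0,2\pi)$ with $\cos\theta^* = y$ and approximate $\theta^*$ by $k\alpha\pi \bmod 2\pi$, then use uniform continuity of cosine to make $|\cos(k\alpha\pi) - y|$ arbitrarily small.

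The only nontrivial ingredient is Kronecker's theorem itself, which is standard: one shows the orbit has at least one accumulation point in $[0,2\pi)$ (by compactness and infiniteness, using that irrationality of $\alpha$ makes the $k\alpha\pi \bmod 2\pi$ pairwise distinct), and then uses the group structure of rotations modulo $2\pi$ to translate accumulation everywhere. Since this is a well-known classical result, I would simply cite it rather than reprove it, making the whole argument essentially a two-line reduction.
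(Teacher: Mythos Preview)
Your argument is correct. The paper does not prove this statement at all: it is labeled as a ``Fact'' and simply invoked without justification, relying on the reader to recognize it as a standard consequence of Kronecker's theorem on irrational rotations. So there is no ``paper's own proof'' to compare against; you have supplied exactly the kind of routine verification the authors chose to omit.

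One small remark on the aperiodicity step: your claim that two consecutive instances of the identity $\cos((k{+}p)\alpha\pi)=\cos(k\alpha\pi)$ force $\cos(p\alpha\pi)=1$ is right but slightly compressed. Writing the two instances as a homogeneous linear system in the unknowns $1-\cos(p\alpha\pi)$ and $\sin(p\alpha\pi)$, a nonzero solution would force the unit vectors $(\cos(k\alpha\pi),\sin(k\alpha\pi))$ and $(\cos((k{+}1)\alpha\pi),\sin((k{+}1)\alpha\pi))$ to be parallel, i.e.\ $\alpha\in\mathbb{Z}$, which is already a contradiction. This is presumably what you had in mind; it just deserves one more line if written out in full. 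The alternative route you mention (periodicity of $\cos(k\alpha\pi)$ $\Rightarrow$ $R_{\alpha\pi}^p=I$) also needs this same extra observation, since periodicity of the first coordinate alone does not immediately give periodicity of the full vector.
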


Now we pick the matrix $ R_{\theta} = \begin{pmatrix} 	3/5 & -4/5 \\ 4/5 & 3/5 \end{pmatrix} $ and consider the corresponding rotation automaton $\mathcal{R}_{\theta}$. By Fact~\ref{dense}, for any given $ \lambda_1 < \lambda_2 \in [0,1) $ there is an integer $ k > 0 $ such that 
$
	\lambda_1 < \cos (k \theta_1) < \lambda_2.
$
Therefore, we can follow that
\[
	L(\mathcal{R}_{\theta},\lambda_2) \subsetneq L(\mathcal{R}_{\theta},\lambda_1)
\]
since $a^k\in L(\mathcal{R}_{\theta},\lambda_1) \backslash L(\mathcal{R}_{\theta},\lambda_2)$. That is, for any given $\lambda \in [0,1) $, we obtain a different language $ L(\mathcal{R}_{\theta},\lambda) $. Thus, we have proved

\begin{theorem} \label{ups2}
	The cardinality of $ \sf UnaryPseudoS[2] $ is uncountable. 
\end{theorem}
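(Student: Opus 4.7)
The plan is to exhibit a single 2-state unary GFA whose cutpoint-indexed family of languages is already uncountable, along the lines sketched in the paragraphs preceding the theorem. I would take the rotation automaton $\mathcal{R}_{\theta}$ with $\cos\theta = 3/5$ and $\sin\theta = 4/5$, so that the accepting value on $a^k$ equals $\cos(k\theta)$, and show that distinct cutpoints in $[0,1)$ produce distinct languages.

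To apply Fact~\ref{dense} I first need $\theta/\pi$ to be irrational, and this is the only step that requires more than bookkeeping. I would verify it by a short Gaussian-integer argument: set $z = (3+4i)/5$, so $\theta = \arg z$ and $|z|=1$, and suppose for contradiction that $z^n = 1$ for some $n\ge 1$. Then $(3+4i)^n = 5^n$ in $\mathbb{Z}[i]$, and the factorizations $3+4i = (2+i)^2$ and $5 = (2+i)(2-i)$ reduce this to $(2+i)^n = (2-i)^n$. Unique factorization in $\mathbb{Z}[i]$ rules this out because $2+i$ and $2-i$ are non-associate primes. Alternatively one may cite Niven's theorem, which lists the only rational cosines of rational multiples of $\pi$ as $0,\pm 1/2,\pm 1$. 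Either way $\theta/\pi\notin\mathbb{Q}$, and Fact~\ref{dense} gives that $\{\cos(k\theta):k\ge 0\}$ is dense in $[-1,1]$.

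The remaining step is purely packaging. For any $\lambda_1 < \lambda_2$ in $[0,1)$ density yields some $k$ with $\lambda_1 < \cos(k\theta) < \lambda_2$, so $a^k \in L(\mathcal{R}_\theta,\lambda_1)\setminus L(\mathcal{R}_\theta,\lambda_2)$ and in particular the two languages differ. Hence the map $\lambda \mapsto L(\mathcal{R}_\theta,\lambda)$ is injective on the uncountable set $[0,1)$, producing uncountably many languages in $\mathsf{UnaryPseudoS}[2]$.

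The sole genuine obstacle is the irrationality of $\theta/\pi$; everything downstream is immediate from density. The Pythagorean triple $(3,4,5)$ is convenient precisely because it turns that step into a one-line application of unique factorization in $\mathbb{Z}[i]$, and the rational entries keep $\mathcal{R}_\theta$ easy to describe. Any other angle that is an irrational multiple of $\pi$ would serve the same purpose, but would require a separate justification in place of the Gaussian-integer computation.
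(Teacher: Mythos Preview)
Your argument is correct and follows the paper's own proof essentially verbatim: choose the rotation automaton with $\cos\theta=3/5$, invoke density of $\{\cos(k\theta)\}$ via Fact~\ref{dense}, and separate any two cutpoints in $[0,1)$. The only addition is that you supply an explicit proof (via $\mathbb{Z}[i]$ or Niven's theorem) that $\theta/\pi$ is irrational, whereas the paper simply asserts the hypothesis of Fact~\ref{dense} without justification; this fills a small gap rather than changing the approach.
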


\begin{remark}
Due to the aperiodicity of the sequence $f_{\cal M}(w)$, each $L(\mathcal{R}_{\theta},\lambda)$ is nonregular. 
\end{remark}

By \eqref{eq:class-equality}, $ \sf UnaryS$ and $\sf UnaryQAL $ also have uncountable cardinality. Moreover, the automaton $ \mathcal{R}_{\theta} $ is also a MCQFA with the accepting probability $ \cos^2 (k \theta) $ on the input $a^k$. So, for any given $ \lambda_1 < \lambda_2 \in [0,1) $, there is  some $ k > 0 $ such that $\lambda^2_1 < \cos^2 (k \theta) < \lambda_2^2  $. Repeating the rest of the proof of Theorem~\ref{ups2}, we get 

\begin{theorem} \label{umcl2}
	The cardinality of $ \sf UnaryMCL[2] $ (and hence of $ \sf UnaryMCL $ and of $ \sf UnaryQAL[2] $) is uncountable. 
\end{theorem}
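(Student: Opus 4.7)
The plan is to reuse the very same 2-state rotation automaton $\mathcal{R}_\theta$ that witnessed Theorem~\ref{ups2}, but to reinterpret it as an MCQFA rather than a GFA. First I would check that $\mathcal{R}_\theta$ fits the MCQFA template of the Background: the matrix $R_\theta$ is real orthogonal and hence unitary, the initial vector $\bigl(\begin{smallmatrix}1\\0\end{smallmatrix}\bigr)=\ket{q_1}$ is already a basis state as required, and the 0-1 final vector $(1,\,0)$ can be promoted to the projective measurement $P=\{P_a,P_r\}$ with $P_a=\ket{q_1}\bra{q_1}$ and $P_r=\ket{q_2}\bra{q_2}$. This repackaging makes $\mathcal{R}_\theta$ a bona fide 2-state unary MCQFA without changing its dynamics.

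Next I would compute the accepting probability on the input $a^k$. Since $R_\theta^k\ket{q_1}=(\cos(k\theta),\sin(k\theta))^T$, applying $P_a$ and taking the squared norm gives $f_{\mathcal{R}_\theta}(a^k)=\cos^2(k\theta)$. With this in hand the density argument of Theorem~\ref{ups2} carries over almost verbatim: Fact~\ref{dense} ensures that $\cos(k\theta)$ is dense in $[-1,1]$ for the irrational $\theta/\pi$ chosen there, so $\cos^2(k\theta)$ is dense in $[0,1]$; for every pair $\lambda_1<\lambda_2$ in $[0,1)$ there is some $k$ with $\lambda_1<\cos^2(k\theta)<\lambda_2$, whence $a^k\in L(\mathcal{R}_\theta,\lambda_1)\setminus L(\mathcal{R}_\theta,\lambda_2)$ and so $L(\mathcal{R}_\theta,\lambda_2)\subsetneq L(\mathcal{R}_\theta,\lambda_1)$. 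Letting $\lambda_1$ range over a continuum yields uncountably many distinct 2-state unary MCQFA languages, proving the cardinality claim for $\mathsf{UnaryMCL[2]}$. The corollary containments $\mathsf{UnaryMCL[2]}\subseteq\mathsf{UnaryMCL}$ and $\mathsf{UnaryMCL[2]}\subseteq\mathsf{UnaryQAL[2]}$ then take care of the two other classes, since every MCQFA is a QFA and enlarging the state bound only enlarges the class.

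The only non-routine point is the conceptual shift from a GFA accepting \emph{value} in $[-1,1]$ to a QFA accepting \emph{probability} in $[0,1]$: the measurement supplies the squaring that rescales the range, and fortunately squaring preserves density of the witness sequence. I do not expect any genuine obstacle beyond making the measurement consistent with the MCQFA formalism from the Background, which is a purely notational check.
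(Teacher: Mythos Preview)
Your proposal is correct and mirrors the paper's own argument: reinterpret $\mathcal{R}_\theta$ as an MCQFA, observe that the accepting probability becomes $\cos^2(k\theta)$, and then rerun the density/separation argument of Theorem~\ref{ups2}. The paper phrases the separating step as $\lambda_1^2<\cos^2(k\theta)<\lambda_2^2$ (squaring an inequality obtained from Fact~\ref{dense}) whereas you appeal directly to density of $\cos^2(k\theta)$ in $[0,1]$, but this is only a cosmetic difference.
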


The classes $\sf S$ and $\sf QAL$ remain the same when the cutpoint is fixed to a value between 0 and 1. But, this is not true for cutpoint 0. With the cutpoint 0, PFAs recognize only regular languages \cite{Paz71} and QFAs recognize ``exclusive'' stochastic languages ($\sf S^{\neq}$) but not all stochastic languages \cite{YS10A}. Note that unary ``exclusive'' stochastic languages are regular \cite{SS78}. 

It was an open question whether with cutpoint 0 MCQFAs recognize a proper subset of $\sf MCL $ \cite{YS10A}. Now we  answer this question in the affirmative. All unary languages recognized by MCQFAs with cutpoint 0 are regular as mentioned above, while $\sf UnaryMCL $ contains uncountably many unary nonregular languages.

\subsection{Small unary PFAs} \label{sec:basic-facts}

We continue with unary PFAs with few states. Contrary to GFAs and QFAs, 2-state unary PFAs recognize only regular languages. This fact was mentioned in \cite[Ch.~3]{Paz71} as Exercise~15. For the sake of completeness, we prove this result as Theorem~\ref{thm:paz-2-state-pfa}. Our proof explicitly lists all these regular languages. Another deep distinction of PFAs is the following. A single unary GFA or QFA  can define uncountable many languages by selecting different cutpoints. On the other hand, a unary $n$-state PFA defines at most $n$ nonregular languages, and hence, countably many languages at all \cite[Ch.~3, Ex.~11]{Paz71}. Thus, in order to prove that the cardinality of $\sf UnaryS[n]$ is uncountable for some $n$, we need a different argument.

It is known that 3-state unary PFAs recognize some nonregular languages \cite[Thm.~3.6]{Paz71}. The idea behind the proof of this statement can be developed to show the main result of this section (Theorem~\ref{thm:3-state-pfa}): the cardinality of $\sf UnaryS[3]$ is uncountable. A weaker result, namely, the fact that the cardinality of $\sf UnaryS[4] $ is uncountable, was proved in \cite{ShY14} using a quite different technique based on Turakainen's theorem  \cite{Tur75} about the ``conversion'' of GFAs into PFAs. We also note that both Theorems~\ref{thm:paz-2-state-pfa} and~\ref{thm:3-state-pfa} are proved in the strong form with respect to endmarkers: they are on in Theorem~\ref{thm:paz-2-state-pfa} and off in Theorem~\ref{thm:3-state-pfa}.

\begin{theorem} 
	\label{thm:paz-2-state-pfa} 
For any 2-state unary PFA $\cal P$ with endmarkers and any $\lambda \in [0,1)$, the language	$L(\mathcal{P},\lambda)$ is regular. 
\end{theorem}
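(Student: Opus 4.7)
The plan is to observe that, with the endmarkers absorbed into an effective initial vector $w = A_{\cent}v_0$ and an effective final vector $u^\top = f A_{\dollar}$, the accepting value on the input $a^k$ reduces to $f_{\mathcal P}(a^k) = u^\top A_a^k w$, so the entire dynamics is governed by the single $2{\times}2$ left-stochastic matrix $A_a$. I would parametrise
\[
A_a = \begin{pmatrix} 1-p & q \\ p & 1-q \end{pmatrix}, \qquad p,q \in [0,1],
\]
and note that its eigenvalues are $1$ and $r := 1-p-q \in [-1,1]$, with $A_a$ diagonalisable whenever $A_a \ne I$. Diagonalising then yields a closed form
\[
f_{\mathcal P}(a^k) \;=\; c_0 + c_1\, r^k
\]
for constants $c_0, c_1 \in \mathbb{R}$ depending only on $\mathcal P$, not on $k$; the degenerate case $A_a = I$ fits trivially with $r=1$.

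The core of the proof is then a case split on $r \in [-1,1]$. If $r = 1$, the sequence is constant and $L(\mathcal P,\lambda) \in \{\varnothing, a^*\}$. If $r = -1$, the sequence is $2$-periodic, taking only the values $c_0 \pm c_1$, and $L(\mathcal P,\lambda)$ is a Boolean combination of $\even$ and $a\even$. If $0 \le r < 1$ (and $c_1 \ne 0$), the sequence is strictly monotone and converges to $c_0$, so it crosses any threshold $\lambda$ at most once; hence $L(\mathcal P,\lambda)$ is one of $\varnothing$, $a^*$, $\less_n$, or $\overline{\less_n}$. Finally, if $-1 < r < 0$, splitting by the parity of $k$ gives two monotone geometric subsequences approaching $c_0$ from opposite sides; within $\even$ and within $a\even$ the language is again finite or cofinite, and the overall language is their union, hence regular.

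Simultaneously, the four cases produce the explicit catalogue of languages promised in the theorem statement, so a single combined argument yields both regularity and the list. The only real bookkeeping difficulty is the oscillating case $-1 < r < 0$: one must use the signs of $c_1$ and of $r^k$ (which depends on the parity of $k$) to decide, for each parity class, whether the approach to $c_0$ comes from above or below, and thereby which of the two parity-restricted sublanguages is $\less_n$-like and which is $\overline{\less_n}$-like. Beyond this, the argument is elementary, relying only on the fact that the second eigenvalue of any $2{\times}2$ stochastic matrix satisfies $|r| \le 1$, which forces every trajectory of a two-state Markov chain to be constant, period-$2$, or geometrically convergent.
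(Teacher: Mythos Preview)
Your proposal is correct and follows essentially the same route as the paper: both absorb the endmarkers, derive the closed form $f_{\mathcal P}(a^k)=c_0+c_1 r^k$ with $r=1-p-q$ the second eigenvalue of the $2{\times}2$ stochastic matrix, and then case-split on $r\in\{-1,1\}$, $r\in(0,1)$, $r\in(-1,0)$ (and $r=0$) to read off the finite list of regular languages. The only cosmetic difference is that the paper obtains the closed form via the stationary distribution plus a deviation term, whereas you invoke diagonalisation directly; the resulting analysis and language catalogue are identical.
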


\begin{proof}
Let $\mathcal{P} = \big(\{q_1,q_2\},\{a\},\{A_a = A\},v_0=(v_{01},v_{02})^\top,f=(f_1~~f_2)\big)$. The matrix $A$ can be written as
\[
	A=\left( \begin{array}{ccc} 1-x & ~~ & y \\ x & & 1-y \end{array}	 \right).
\]
If $x=y=0$, then $A$ is identity and for any input $a^m$ ($m\geq 0$) $f_{\cal P}(a^m) = f v_0$ is fixed. Then, $ L({\cal P},\lambda) $ is either $\varnothing$ or $a^*$.

If $x=y=1$, then $\cal P$ alternates between two probabilistic states:
\[
	v_0 = 
	\left(\begin{array}{cc} v_{01} \\ v_{02} \end{array}	 \right)
	,
	v_1 = \left(\begin{array}{cc} v_{02} \\ v_{01} \end{array}	 \right)
	,
	v_2 = v_0 = \left(\begin{array}{cc} v_{01} \\ v_{02} \end{array}	 \right)
	,
	v_3 = v_1 =  \left(\begin{array}{cc} v_{02} \\ v_{01} \end{array}	 \right)
	, \cdots
\]
That is, for any $m \geq 0$, $f_{\cal P}(a^{2m}) = f v_0$ and $f_{\cal P}(a^{2m+1}) = f v_1$. Then, $ L({\cal P},\lambda) $ can be $\varnothing$, $ \even $,  $ \overline{\even} $, or $a^*$. 

In the remaining part, we assume that $ x+y \in (0,2) $. The stationary distribution of $A$ is
\[ 
	\left( \begin{array}{c} \dfrac{y}{x+y} \\ \\ \dfrac{x}{x+y}
\end{array}	 \right).
\]
Since $v_0$ is stochastic, 
\[ 
	v_0 = \left( \begin{array}{c} \dfrac{y}{x+y} + c \\ \\ \dfrac{x}{x+y} -c
\end{array}	 \right)
\]
for some $ c \in \mathbb{R} $. After reading an $a$, the new state is
\[ 
	v_1 = Av_0 =\left( \begin{array}{c} \dfrac{y}{x+y} + c (1-(x+y)) \\ \\ \dfrac{x}{x+y} - c (1-(x+y))
\end{array}	 \right).
\]
So, the state after reading $j$ symbols is
\[ 
	v_j = \left( \begin{array}{c} \dfrac{y}{x+y} + c (1-(x+y))^j \\ \\ \dfrac{x}{x+y} - c (1-(x+y))^j
\end{array}	 \right).
\]
Then, the accepting probability of $\cal P$ on $a^m$ is 
\[
f_{\cal P}(a^m) = 
\begin{cases}
f_1  \dfrac{y}{x+y} + f_2  \dfrac{x}{x+y} + c (f_1-f_2) \left( 1-(x+y)\right)^m&\text{ if } m>0,\\
f_1  \dfrac{y}{x+y} + f_2  \dfrac{x}{x+y} + c (f_1-f_2)&\text{ if } m=0.
\end{cases}
\]
If $ f_1 = f_2  $, then the accepting probability is fixed for any string. Then, $ L({\cal P},\lambda) $ is either $\varnothing$ or $a^*$.

If $ x+y =1 $, then the accepting probability of any string of nonzero length is fixed, but, the accepting probability of the empty string $\varepsilon$ can be different. Then, $ L({\cal P},\lambda) $ can be $\varnothing$, $\{\varepsilon\}$, $ a^+ $, or $a^*$.

By also excluding these two cases, we can rewrite $ f_{\cal P}(a^m) $ as $ z+ r t^m $, where 
\[
	z=f_1  \dfrac{y}{x+y} + f_2  \dfrac{x}{x+y}, r=c (f_1-f_2), \text{ and } t=( 1-(x+y)).
\]
Since $|t|<1$, it is clear that $ f_{\cal P}(a^m) \rightarrow z  $ as $ m \rightarrow \infty$. If $t$ is positive, then $ f_{\cal P}(a^m) $ monotonely approaches  $z$. Thus, the possible values of $ L({\cal P},\lambda) $, depending on the cutpoint, are the languages $ \less_n $ for any  $ n \geq 0 $ and their complements $ \overline{ \less_n }$. If $t$  is negative, then  $ f_{\cal P}(a^m) $ shows a dying oscillation around $z$ with period 2. Thus, the language $ L({\cal P},\lambda) $ equals $  \less_n \cap \even$,  $ \less_n \cap \overline{ \even } $, $ \overline{ \less_n } \cap\even$, $\overline{\less_n} \cap \overline{ \even }$ ($ n\ge 0$), or the complement of one of these languages.
\qed\end{proof}

Remark that the absence of endmarkers does not change the class, i.e., $ \sf UnaryS[\cent 2 \dollar ] = UnaryS[2] $.

\begin{theorem} \label{thm:3-state-pfa}
	The cardinality of $\sf UnaryS[3] $ is uncountable.
\end{theorem}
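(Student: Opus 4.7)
The plan is to exhibit a continuous family $\{\mathcal{P}_p\}_{p \in (0,1)}$ of 3-state doubly stochastic unary PFAs, all sharing the single cutpoint $1/3$, and to show that uncountably many of the pairs $(\mathcal{P}_p, 1/3)$ recognize pairwise distinct languages. The underlying idea extends the construction behind \cite[Thm.~3.6]{Paz71}: a circulant doubly stochastic $3\times 3$ matrix has Perron eigenvalue $1$ together with a complex-conjugate pair $\mu, \bar\mu$ of non-Perron eigenvalues that necessarily satisfy $|\mu|<1$, so iterating the matrix produces a damped rotation whose angular frequency can be tuned continuously via the matrix entries. Since a single PFA yields only countably many nonregular languages as the cutpoint varies, genuinely different PFAs (parameterized by the rotation angle) are needed.

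Concretely, I would take
\[
    A_p = \begin{pmatrix} p & 0 & q \\ q & p & 0 \\ 0 & q & p \end{pmatrix}, \qquad q = 1-p,
\]
with initial state $q_1$ and unique accepting state $q_1$, so that the accepting probability on $a^m$ is the $(1,1)$-entry of $A_p^m$. Circulance gives eigenvalues $1$ and $\mu(p) = p + q\omega$, $\bar\mu(p) = p + q\omega^2$ for $\omega = e^{2\pi i/3}$; writing $r(p) = |\mu(p)| = \sqrt{3p^2 - 3p + 1} \in [\tfrac12, 1)$ and $\theta(p) = \arg \mu(p)$, diagonalization in the Fourier basis of $\mathbb{Z}/3\mathbb{Z}$ yields
\[
    f_{\mathcal{P}_p}(a^m) \;=\; \tfrac{1}{3}\bigl(1 + \mu(p)^m + \bar\mu(p)^m\bigr) \;=\; \tfrac{1}{3}\bigl(1 + 2\,r(p)^m \cos(m\,\theta(p))\bigr).
\]
With cutpoint $1/3$ the damping factor $r(p)^m > 0$ is irrelevant to the strict inequality $f_{\mathcal{P}_p}(a^m) > 1/3$, so
\[
    L(\mathcal{P}_p, 1/3) \;=\; \{a^m : \cos(m\,\theta(p)) > 0\}.
\]
A short calculation shows that $p \mapsto \theta(p)$ is continuous and strictly monotone from $(0,1)$ onto $(0, 2\pi/3)$, so the set of $p$ with $\theta(p)/\pi$ irrational is cocountable. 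For such $p$, Fact~\ref{dense} (or direct equidistribution) makes the sign pattern of $\cos(m\theta(p))$ aperiodic, so $L(\mathcal{P}_p, 1/3)$ is non-regular.

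The crucial final step is showing that uncountably many of these languages are pairwise distinct. I would prove the stronger fact that the map $\theta \mapsto L_\theta := \{m \ge 0 : \cos(m\theta) > 0\}$ is at most countable-to-one on the irrational multiples of $\pi$ in $(0, 2\pi/3)$, which suffices. The natural route is to recover $\theta$ from $L_\theta$ via autocorrelations
\[
    \gamma_k(\theta) \;=\; \lim_{N\to\infty} \tfrac{1}{N}\,\bigl|\{1 \le m \le N : m \in L_\theta \text{ and } m+k \in L_\theta\}\bigr|,
\]
which by Weyl equidistribution equal the Lebesgue measure of $\{x \in \mathbb{T} : \cos(2\pi x)>0 \text{ and } \cos(2\pi x + k\theta)>0\}$; this measure determines $k\theta \bmod 2\pi$ up to sign, and the joint data across $k \ge 1$ pins down $\theta$.

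I expect this distinctness argument to be the main obstacle: one must check existence of the limits from $L_\theta$ alone, and verify that the up-to-sign values of $k\theta \bmod 2\pi$ uniquely determine $\theta \in (0, 2\pi/3)$. If the autocorrelation route becomes cumbersome, a coarser cardinal argument works equally well: for each fixed $\theta$, one rules out all but countably many candidate $\theta'$ by producing, for each candidate, a specific index $m$ at which $\cos(m\theta)$ and $\cos(m\theta')$ have opposite signs (shrinking the intervals $(\pi/2 + k\pi, 3\pi/2 + k\pi)\bmod 2\pi$ via a Kronecker-type argument). Either way, the image of $\{(\mathcal{P}_p, 1/3) : p \in (0,1)\}$ in the power set of $\{a\}^*$ is uncountable, proving that $\sf UnaryS[3]$ has uncountable cardinality.
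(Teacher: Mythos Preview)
Your construction is sound and in some ways tidier than the paper's: the circulant family $A_p$ gives $f_{\mathcal{P}_p}(a^m)=\tfrac13\bigl(1+2r(p)^m\cos m\theta(p)\bigr)$ straight from the Fourier eigenbasis, so a \emph{single} cutpoint $1/3$ serves for the whole family and there is no phase term, whereas the paper's non-circulant $A_x$ yields $f_{\mathcal{P}_x}(a^m)=\lambda_x+Dx^{m/2}\cos(m\theta_x+\gamma_x)$ with an $x$-dependent cutpoint $\lambda_x=1/(3x{+}1)$ and phase $\gamma_x$ that have to be computed from the initial conditions. The underlying mechanism---a damped rotation whose angle ranges continuously over a subinterval of $(0,\pi)$---is the same. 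Where you genuinely diverge is the distinctness step: you invoke Weyl equidistribution to recover $\theta$ from $L_\theta$ via the autocorrelation $\gamma_1$ (this does work; for irrational $\theta/\pi$ in $(0,\pi)$ one gets $\gamma_1=\tfrac12-\theta/2\pi$, so $\theta$ is determined), while the paper gives a completely elementary argument: for $\theta_1<\theta_2$ (both below $\pi$), choose $m$ with $m(\theta_2-\theta_1)+(\gamma_{x_2}-\gamma_{x_1})\le\pi<(m{+}1)(\theta_2-\theta_1)+(\gamma_{x_2}-\gamma_{x_1})$; then the angles $m\theta_i+\gamma_{x_i}$ either already lie in adjacent half-period intervals of $\cos$, or they share an interval and the next step forces the $(m{+}1)$-angles into adjacent intervals (their gap exceeds $\pi$ while each advances by less than $\pi$), so $a^m$ or $a^{m+1}$ separates the two languages. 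This argument transfers verbatim to your family with $\gamma=0$, proves that $\theta\mapsto L_\theta$ is injective on \emph{all} of $(0,2\pi/3)$ with no irrationality hypothesis and no ergodic input, and would replace the part of your sketch you flagged as the main obstacle; your route, in exchange, buys an explicit reconstruction of the rotation angle from the language.
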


\begin{proof}
For each $ x \in (0,\frac{1}{2}] $, we consider the stochastic matrix 
\[
	A_x = \left( \begin{array}{ccc}
		0~~  & 0  & x \\
		1~~ & 0 & x \\
		0~~ & 1 & 1 {-} 2x
\end{array}	 \right),
\]
and the corresponding PFA ${\cal P}_x=\big(\{q_1,q_2,q_3\},\{a\},A_x,(1~0~0)^\top,(0~0~1)\big)$. The eigenvalues of $  A_x $ are 
\[
	r_1=1,\ r_{2,3}=-x \pm \sqrt{x^2-x}.
\]
In the prescribed interval for $x$, two of them are complex numbers and can be written as 
\begin{equation} \label{eq_lambda}
r_{2,3}=\sqrt{x}(\cos\theta_x\pm i\sin\theta_x), \text{ where }\theta_x = \arccos(-\sqrt{x}) = \pi-\arcsin\sqrt{1{-}x}. 
\end{equation}
Let us fix $x$ and denote the entries of the matrix $A_x^m$ by $a_{ij}^{(m)}$. By the Cayley-Hamilton theorem,  $A_x$ satisfies its own characteristic equation. Then the sequence $\{A_x^m\}$ satisfies the linear homogeneous recurrence relation with the same characteristic equation. Therefore, this recurrence holds for any sequence $\{a_{ij}^{(m)}\}$. Since all roots of the characteristic equation are simple, $\{a_{ij}^{(m)}\}$ is a linear combination of sequences $\{r_1^m\},\{r_2^m\}$, and $\{r_3^m\}$ by the main theorem on linear recurrences. 

Note that $f_{{\cal P}_x}(a^m)=a_{31}^{(m)}$ by the definition of the automaton ${\cal P}_x$. Hence, 
\begin{equation} \label{eq_fPx}
f_{{\cal P}_x}(a^m)= A + br_2^m + cr_3^m,
\end{equation}
where the coefficients can be found from the initial conditions 
\begin{equation} \label{eq_012}
f_{{\cal P}_x}(a^0)=f_{{\cal P}_x}(a^1)=0,\ \ f_{{\cal P}_x}(a^2)=1.
\end{equation}
Since $r_2$ and $r_3$ are complex conjugates, $b$ and $c$ should be complex conjugates as well to make the sum \eqref{eq_fPx} a real number. To get rid of the complex-valued coefficients, we substitute $b=B+iC$ and $c=B-iC$ into \eqref{eq_fPx}. Taking \eqref{eq_lambda} into account, we obtain
\begin{multline} \label{eq_fPx2}
f_{{\cal P}_x}(a^m)= A + 2x^{m/2}(B\cos m\theta_x - C\sin m\theta_x)=\\
A + 2\sqrt{B^2+C^2}\cdot x^{m/2}\cdot\cos (m\theta_x +\gamma_x), \text{ where }\gamma_x=\arccos\frac{B}{\sqrt{B^2+C^2}}.
\end{multline}
The conditions \eqref{eq_012} give us a system of three linear equations in the variables $A,B$, and $C$. Solving this system, we obtain
$$
A=\frac1{3x+1},\ \ B=-\frac1{6x+2},\ \ C=\frac{x+1}{(6x+2)\sqrt{x-x^2}},
$$
and finally transform \eqref{eq_fPx2} into 
\begin{equation} \label{eq_fPx3}
f_{{\cal P}_x}(a^m)= \lambda_x + Dx^{m/2}\cos(m\theta_x+\gamma_x), 
\end{equation}
where $\lambda_x=\frac1{3x+1}\in(0,1)$, $D=\frac1{\sqrt{(3x+1)(x-x^2)}}>0$ and $\gamma_x=\arccos\big(-\sqrt{\frac{x-x^2}{3x+1}}\big)$.

\bigskip
To prove the theorem, it suffices to show that all languages of the form $L({\cal P}_x,\lambda_x)$ are distinct. By \eqref{eq_fPx3}, $a^m\in L({\cal P}_x,\lambda_x)$ if and only if $\cos(m\theta_x+\gamma_x)>0$. By \eqref{eq_lambda}, $x_1<x_2$ implies $ \theta_{x_1}<\theta_{x_2} $, and the set of all possible values of angles $\theta_x$ is the interval $\big(\frac{\pi}2,\frac{3\pi}4\big]$.

Let us fix $x_1,x_2\in (0,\frac12]$ satisfying $x_1<x_2$ and find $m$ such that 
$$
m(\theta_{x_2}-\theta_{x_1})+\gamma_{x_2}-\gamma_{x_1}\le\pi \text{ and }(m+1)(\theta_{x_2}-\theta_{x_1})+\gamma_{x_2}-\gamma_{x_1}>\pi.
$$
We partition $\mathbb{R}$ into the intervals of length $\pi$, in which the function $\cos\alpha$ does not change sign; all borderline points are attached to ``negative'' intervals:

\centerline{
\unitlength=1mm
\begin{picture}(100,12)(0,-5)
\put(50,-2){\makebox(0,0)[ct]{\small0}}
\put(0,0){\vector(1,0){102}}
\put(50,-1){\line(0,1){2}}
\multiput(34,0)(22,0){3}{\makebox(0,0)[rc]{\big)}}
\multiput(22,0)(22,0){3}{\makebox(0,0)[lc]{\big(}}
\multiput(33,0)(22,0){2}{\makebox(0,0)[lc]{\big[}}
\multiput(45,0)(22,0){2}{\makebox(0,0)[rc]{\big]}}
\multiput(11,-2)(80,0){2}{\makebox(0,0)[cc]{$\cdots$}}
\end{picture}
}
\noindent By the choice of $m$, the numbers $\alpha_1=m\theta_{x_1}+\gamma_{x_1}$ and $\alpha_2=m\theta_{x_2}+\gamma_{x_2}$ differ by at most $\pi$ and then either both are borderline, or belong to the same interval, or belong to adjacent intervals. In the latter case, exactly one of the numbers $\cos\alpha_1$ and $\cos\alpha_2$ is positive; hence, the languages $L({\cal P}_{x_1},\lambda_{x_1})$ and $L({\cal P}_{x_2},\lambda_{x_2})$ are different, because exactly one of them contains the word $a^m$. In the former two cases, consider the numbers $\alpha_1'=(m+1)\theta_{x_1}+\gamma_{x_1}$ and $\alpha_2'=(m+1)\theta_{x_2}+\gamma_{x_2}$. If $\alpha_1$ and $\alpha_2$ are borderline points, then $\alpha_1'$ and $\alpha_2'$ belong to adjacent intervals following these points (recall that $\theta_{x_1}<\theta_{x_2}<\pi$). If $\alpha_1$ and $\alpha_2$ belong to the same interval, then each of $\alpha_1'$ and $\alpha_2'$ belongs to the same or the next interval. Since the distance between  $\alpha_1'$ and $\alpha_2'$ exceeds $\pi$ by the choice of $m$, they cannot belong to the same interval; so they belong to adjacent intervals. Similar to the above, we see that exactly one of the languages $L({\cal P}_{x_1},\lambda_{x_1})$ and $L({\cal P}_{x_2},\lambda_{x_2})$ contains $a^{m+1}$. The theorem is proved.
\end{proof}

\section{One-state pseudo stochastic languages}
\label{sec:one-state-pseudo}

In the previous section, we have shown that 2-state GFAs and QFAs can define uncountable many languages. So, it is interesting to consider the 1-state case. But 1-state QFAs (and so PFAs) are trivial. Indeed, they are always in the same state with probability 1 and so all strings have the same accepting probability. On the other hand, 1-state GFAs recognize many nontrivial languages. For example, the GFA $(\{q\},\{a,b\},\{A_a=(\frac{1}{2}),A_b=(2)\},v_0=1,f=1)$ recognizes the language of all words containing more $b$'s than $a$'s with cutpoint 1.

In this section, we completely describe the languages contained in $ \sf PseudoS[1] $ and relate them to regular and context-free languages. As a corollary, we get a characterization of $ \sf UnaryPseudoS[1] $. For convenience, we write $ \sf PseudoS[1,\Sigma] $ if the alphabet $\Sigma$ is fixed.

Suppose that $\Sigma=\{a_1,\ldots,a_n\}$, $w\in\Sigma^*$, and $|w|_{a_i}$ stands for the number of occurrences of the letter $a_i$ in $w$. Then $\pi(w)=(|w|_{a_1},\ldots,|w|_{a_n})$ is the \emph{Parikh vector} of $w$. Two words with equal Parikh vectors are \emph{anagrams}: they can be obtained from each other by resorting their letters. For a language $L$, $\pi(L)=\{\pi(w)\mid w\in L\}$ is the \emph{Parikh set} of $L$. A language $L$ is \emph{Parikh closed} if it contains all anagrams of any of its words. Parikh vectors appear in many studies on formal languages; a cornerstone result by Parikh \cite{Par66} says that for any context-free language has the same Parikh set as some regular language.

Let us introduce three types of Parikh closed languages. For arbitrary $\alpha\in\mathbb R\cup\{+\infty\}$, $b_1,\ldots,b_n\in\mathbb R$, the \emph{solution language} $\Sol(\Sigma, b_1,\ldots, b_n,\alpha)$ is the language whose Parikh set coincides with the set of all nonnegative integer solutions to the linear inequality $(\vec b, \vec  x)=b_1x_1+\cdots +b_nx_n<\alpha$. The numbers $b_1,\ldots, b_n$ are \emph{coefficients} of the language. For a given $Y\subseteq\Sigma$, the \emph{parity language} $\Par(\Sigma,Y,0)$ [resp., $\Par(\Sigma,Y,1)$] consists of all words from $\Sigma^*$ having even [resp., odd] number of occurrences of letters from $Y$. Finally, the \emph{indicator language} $\Ind(\Sigma,Y)$ consists of all words containing at least one letter from $Y$. In particular, one has $\Par(\Sigma,\varnothing,0)=\Sigma^*$, $\Par(\Sigma,\varnothing,1)=\Ind(\Sigma,\varnothing)=\varnothing$. By convention, we put 
$$
\Sol(\varnothing,\alpha)=
\begin{cases}
\{\varepsilon\}&\text{ if }\alpha>0,\\
\varnothing&\text{ if }\alpha\le 0.
\end{cases}
$$
It is easy to see that all parity languages and indicator languages are regular. On the other hand, most of the solution languages are not regular. For example, the inequality $x_1-x_2<0$ generates the above mentioned binary language $\{w \in \{a,b\}^* \mid |w|_a < |w|_b \} $.

\begin{theorem} \label{psS1}
For a fixed finite alphabet $\Sigma$, let $\bf\Lambda$ be the set of all languages of the form 
\begin{equation} \label{vee}
\Sol(X, b_1,\ldots, b_{|X|},\alpha)\cap\Par(X,Y,i),
\end{equation}
where $Y\subseteq X\subseteq\Sigma$, $i\in\{0,1\}$. Further, let $\bf V$ be set of all languages of the form 
\begin{equation} \label{wedge}
\Sol(X, b_1,\ldots, b_{|X|},\alpha)\cup\Par(X,Y,i)\cup\Ind(\Sigma,\Sigma\backslash X),
\end{equation}
where $Y\subseteq X\subseteq\Sigma$, $i\in\{0,1\}$, $\alpha\ne+\infty$. Then
\begin{equation} \label{ps}
\sf PseudoS[1,\Sigma]=\bf\Lambda\cup\bf V.
\end{equation}
\end{theorem}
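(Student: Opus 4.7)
A 1-state GFA $\mathcal{G}=(\{q\},\Sigma,\{r_\sigma\},v_0,f)$ has $1\times 1$ transition matrices, so its accepting value
\[
f_{\cal G}(w)=c\prod_{\sigma\in\Sigma}r_\sigma^{|w|_\sigma}, \qquad c=fv_0,
\]
depends only on the Parikh vector of $w$; every $L\in\sf PseudoS[1,\Sigma]$ is therefore Parikh closed, just like the languages $\Sol$, $\Par$, $\Ind$. The plan is to analyze $f_{\cal G}(w)>\lambda$ by means of the partitions $\Sigma=X\sqcup(\Sigma\setminus X)$ with $X=\{\sigma:r_\sigma\ne 0\}$, and $X=(X\setminus Y)\sqcup Y$ with $Y=\{\sigma\in X:r_\sigma<0\}$. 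For $w\notin X^*$ the value is $0$, and for $w\in X^*$
\[
f_{\cal G}(w)=c\,(-1)^{k_Y(w)}\prod_{\sigma\in X}|r_\sigma|^{|w|_\sigma}, \qquad k_Y(w):=\sum_{\sigma\in Y}|w|_\sigma.
\]
The sign $\mathrm{sign}(c)\cdot(-1)^{k_Y(w)}$ splits $X^*$ into the parity classes $\Par(X,Y,0)$ and $\Par(X,Y,1)$, exactly one of which carries positive accepting values, while taking logarithms of the positive magnitudes $|r_\sigma|$ converts the remaining magnitude inequality into a linear one of the form $\sum_\sigma b_\sigma|w|_\sigma<\alpha$, i.e., into a solution language.

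For the inclusion $\sf PseudoS[1,\Sigma]\subseteq\bf\Lambda\cup\bf V$ I case-split on $\mathrm{sign}(c)$ and $\mathrm{sign}(\lambda)$. If $\lambda\ge 0$, then all words with $f_{\cal G}(w)\le 0$ (including every $w\notin X^*$ and the entire ``wrong'' parity class in $X^*$) are excluded from $L$, while the ``right'' parity class is filtered by the linear inequality above; hence $L=\Sol(X,b_1,\ldots,b_{|X|},\alpha)\cap\Par(X,Y,i)\in\bf\Lambda$. If $\lambda<0$, then every $w\notin X^*$ is automatically in $L$, the ``right'' parity class lies entirely in $L$, and only the ``wrong'' parity is filtered. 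The elementary identity
\[
\Par(X,Y,i)\cup\bigl(S\cap\Par(X,Y,1{-}i)\bigr)=\Par(X,Y,i)\cup S, \qquad S\subseteq X^*,
\]
then lets me rewrite $L$ in the canonical form $\Sol\cup\Par(X,Y,i)\cup\Ind(\Sigma,\Sigma\setminus X)\in\bf V$. The degenerate subcases $c=0$, $X=\varnothing$, and constant accepting values on a parity class are absorbed either by the convention on $\Sol(\varnothing,\alpha)$ or by admitting $\alpha=+\infty$ in $\bf\Lambda$.

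The converse inclusion $\bf\Lambda\cup\bf V\subseteq\sf PseudoS[1,\Sigma]$ is proved by explicit construction. Given $L=\Sol(X,b_1,\ldots,b_{|X|},\alpha)\cap\Par(X,Y,i)\in\bf\Lambda$, I set $r_\sigma=0$ for $\sigma\notin X$, $r_\sigma=(-1)^{[\sigma\in Y]}e^{-b_\sigma}$ for $\sigma\in X$, $c=(-1)^i$, and $\lambda=e^{-\alpha}$ (interpreting $e^{-\infty}=0$); the sign of $c$ makes the positive-valued parity class exactly $\Par(X,Y,i)$, the choice $\lambda\ge 0$ rejects every zero-valued word, and the exponentials convert $f_{\cal G}(w)>\lambda$ into $\sum_\sigma b_\sigma|w|_\sigma<\alpha$. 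For $L\in\bf V$ (where $\alpha$ is forced finite) I use the same sign pattern but take $|r_\sigma|=e^{b_\sigma}$ and $\lambda=-e^\alpha<0$, so that the zero-valued words and the entire class $\Par(X,Y,i)$ are automatically accepted, while $\sum b_\sigma|w|_\sigma<\alpha$ filters the other parity. The main obstacle throughout is the careful bookkeeping across the six combined subcases (signs of $c$ and $\lambda$, together with degeneracies of $X$ and $\alpha$) and the accompanying flip between the encodings $|r_\sigma|=e^{-b_\sigma}$ and $|r_\sigma|=e^{b_\sigma}$ that the direction of the resulting inequality forces; once the collapse identity above is in hand, each case reduces to a direct verification.
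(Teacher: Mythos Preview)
Your proof is correct and follows essentially the same route as the paper's: partition $\Sigma$ into $X$ (nonzero transition numbers) and its complement, split $X$ by the sign of the transition numbers to get $Y$, and case-split on the sign of the cutpoint to land in $\mathbf\Lambda$ or $\mathbf V$, with logarithms turning the product condition into a linear inequality. The only cosmetic differences are that the paper divides through by $fv_0$ (obtaining the two inequality directions in \eqref{acccon1}) where you track the sign of $c$ directly, and the paper states the three-term disjunction for $\mathbf V$ outright where you make the ``collapse identity'' explicit.
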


\begin{proof}
The $1\times 1$ matrices are just real numbers, so we replace ``transition matrices'' with ``transition numbers'' in our terminology.
The multiplication of transition numbers is commutative, and this fact has two consequences. First, any $L\in \sf PseudoS[1]$ is Parikh closed. Second, the individual values of $v_0$, $f$, and $\lambda$ do not matter; namely, one can put $\lambda'=\frac\lambda{fv_0}$ and consider two possible acceptance conditions\footnote{A GFA with $v_0=0$ or $f=0$ recognizes either $\varnothing$ or $\Sigma^*$. The same effect can be achieved by setting all transition numbers to 0. Hence we assume w.l.o.g. $v_0,f \ne 0$. We also use the standard convention that $0^0=1$.}: 
\begin{equation}\label{acccon1}
A_{a_1}^{|w|_{a_1}}A_{a_2}^{|w|_{a_2}}\cdots A_{a_n}^{|w|_{a_n}} < \lambda' \text{ and } A_{a_1}^{|w|_{a_1}}A_{a_2}^{|w|_{a_2}}\cdots A_{a_n}^{|w|_{a_n}} > \lambda'.
\end{equation}
So, below we assume that a 1-state GFA over an $n$-letter alphabet $\Sigma$ is given by an $n$-tuple $\vec{A}=(A_1=A_{a_1},\ldots,A_n=A_{a_n})$ of real numbers. The cutpoint $\lambda=\lambda'$ and an additional bit to choose among the conditions \eqref{acccon1} are given separately.

We say that a 1-state GFA $\cal G$ is \emph{positive} if all numbers $A_i$ and $\lambda$ are positive. If $\pi(w)=(x_1,\ldots,x_n)$, then the acceptance condition
\begin{equation}\label{acccon2}
A_1^{x_1}\cdots A_n^{x_n} < (>) \lambda
\end{equation}
for a positive GFA can be rewritten as 
\begin{equation}\label{acccon3}
x_1\log A_1+\cdots+x_n\log A_n < (>) \log\lambda,
\end{equation}
where the logarithms are taken at any base greater than 1. But this linear inequality defines either the language $\Sol(\Sigma, \log A_1,\ldots,\log A_n, \log\lambda)$ (for the ``$<$'' sign in \eqref{acccon3}) or the language $\Sol(\Sigma, -\log A_1,\ldots,-\log A_n, -\log\lambda)$ (for the ``$>$'' sign)\footnote{From the geometric point of view, a 1-state positive GFA defines a hyperplane in $\mathbb R^n$ and accepts exactly the words having the ends of their Parikh vectors on the prescribed side of this hyperplane.}.

\smallskip
Now we proceed with the general case. We assume the ``$<$'' sign in \eqref{acccon2}; the case of the ``$>$'' sign admits a completely similar proof, so we omit it. For convenience, we reorder the alphabet such that the numbers $A_1,\ldots,A_k$ are nonzero, while the other transition numbers, if any, are zero. We also put $X=\{a_1,\ldots,a_k\}$ and denote the set of letters with negative transition numbers by $Y$. There are two possibilities. If $\lambda\le 0$, the inequality \eqref{acccon2} for the Parikh vector $(x_1,\ldots,x_n)$ of a word $w$ is equivalent to the conjunction of the following conditions:
\begin{itemize}
\item $w$ contains no letters from outside $X$;
\item the number of letters from $Y$ in $w$ is odd;
\item $|A_1|^{x_1}\cdots |A_k|^{x_k} > |\lambda|$.
\end{itemize}
The first two conditions define the language $\Par(X,Y,1)$, and the first and the third conditions define $\Sol(X, -\log|A_1|,\ldots,-\log|A_k|, -\log|\lambda|)$ (assuming $\log 0=-\infty$). Thus, we get a language from $\bf\Lambda$.

The second possibility is $\lambda>0$. Here \eqref{acccon2} is equivalent to the disjunction of the conditions
\begin{itemize}
\item $w$ contains a letter from outside $X$;
\item the number of letters from $Y$ in $w$ is odd;
\item $|A_1|^{x_1}\cdots |A_k|^{x_k} < \lambda$.
\end{itemize}
Similar to the above, these conditions define a language in $\bf V$ (note that $\alpha$ is finite because $\lambda>0$). Hence we obtain $\sf PseudoS[1,\Sigma]\subseteq\bf\Lambda\cup\bf V$. 

\smallskip
In order to show the reverse inclusion, we use the above considerations to build 1-state GFA's with appropriate acceptance conditions from the elements of $\bf\Lambda\cup\bf V$. Let us first take a language $\Sol(X, b_1,\ldots, b_k,\alpha)\cap\Par(X,Y,i)$ (as above, we assume $X=\{a_1,\ldots,a_k\}$). We put 
\begin{equation} \label{Aj}
A_j=\begin{cases}
0,& \text{if } j>k,\\
2^{b_j},& \text{if } a_j\in X\backslash Y,\\
-2^{b_j},& \text{if } a_j\in Y.
\end{cases}
\end{equation}
If $i=1$, then we use the acceptance condition ``${<}{-2^{-\alpha}}$''. For $i=0$, the condition is ``${>}{2^{-\alpha}}$''. In the case of a language $\Sol(X, b_1,\ldots, b_k,\alpha)\cup\Par(X,Y,i)\cup\Ind(\Sigma,\Sigma\backslash X)$ we also use \eqref{Aj} to define a GFA, but the acceptance conditions are different: ``${>}{2^{\alpha}}$'' for $i=1$ and ``${<}{-2^{\alpha}}$'' for $i=0$. Thus we have $\sf PseudoS[1,\Sigma]\supseteq\bf\Lambda\cup\bf V$. The theorem is proved.
\end{proof}

\begin{corollary}
The cardinality of ${\sf UnaryPseudoS[1},\{a,b\}]$ is uncountable.
\end{corollary}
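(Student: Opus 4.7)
The plan is to exhibit an uncountable one-parameter family of pairwise distinct languages in ${\sf PseudoS}[1,\{a,b\}]$, obtained by varying a single transition number in a positive 1-state GFA. This uses only the ``positive GFA'' special case already handled inside the proof of Theorem~\ref{psS1}, so no new machinery is required.

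For each real $\beta>0$ I would consider the 1-state GFA $\mathcal{G}_\beta$ over $\{a,b\}$ with transition numbers $A_a=\tfrac12$ and $A_b=2^{\beta}$, initial vector $v_0=1$, final vector $f=1$, and cutpoint $\lambda=1$ with ``$>$'' acceptance. The accepting value of $\mathcal{G}_\beta$ on a word $w$ equals
$$
A_a^{|w|_a}A_b^{|w|_b}=2^{-|w|_a+\beta|w|_b},
$$
which exceeds $1$ if and only if $|w|_a<\beta|w|_b$. Therefore
$$
L_\beta:=L(\mathcal{G}_\beta,1)=\{w\in\{a,b\}^*\mid |w|_a<\beta|w|_b\},
$$
which is exactly the solution language $\Sol(\{a,b\},1,-\beta,0)$ with the reversed inequality described in the proof of Theorem~\ref{psS1}, and hence belongs to ${\sf PseudoS}[1,\{a,b\}]$.

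It remains to check that the map $\beta\mapsto L_\beta$ is injective on $(0,\infty)$. Given $0<\beta_1<\beta_2$, I would invoke the density of $\mathbb{Q}$ in $\mathbb{R}$ to pick positive integers $p,q$ with $\beta_1<p/q<\beta_2$. The word $w=a^pb^q$ then satisfies $|w|_a=p>\beta_1q=\beta_1|w|_b$ and simultaneously $|w|_a=p<\beta_2q=\beta_2|w|_b$, so $w\in L_{\beta_2}\setminus L_{\beta_1}$. Consequently $L_{\beta_1}\neq L_{\beta_2}$, and the family $\{L_\beta\mid\beta\in(0,\infty)\}$ has the cardinality of the continuum, establishing the corollary.

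The only step that could in principle be delicate is the distinctness of the $L_\beta$'s, but it reduces immediately to the density of the rationals; no genuine obstacle remains, and the construction is essentially forced by the positive case of Theorem~\ref{psS1}.
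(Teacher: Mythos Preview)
Your proof is correct and follows essentially the same idea as the paper's: exhibit uncountably many distinct solution languages over $\{a,b\}$. The paper's proof is the one-liner ``on a plane, there are uncountably many pairwise non-parallel lines, and all of them define different solution languages''; your family $\{L_\beta\}$ is precisely a concrete realization of this (the lines $x_1-\beta x_2=0$ are pairwise non-parallel), and your density-of-rationals argument makes explicit the separation step that the paper leaves to geometric intuition.
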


\begin{proof}
On a plane, there are uncountably many pairwise non-parallel lines, and all of them define different solution languages.
\end{proof}

\begin{corollary}
All languages in $\sf UnaryPseudoS[1]$ are regular. Moreover, the class 
$\sf UnaryPseudoS[1]$ consists of 
\begin{equation}\label{ups1}
\varnothing,\ \even,\ \less_n,\ 
\less_n \cup \even,\  \less_n \cup \overline{ \even },\ \overline{ \less_n } \cup\even,\ \overline{\less_n} \cup \overline{ \even }, 
\end{equation}
where $n\ge 0$, and the complements of the languages \eqref{ups1}.
\end{corollary}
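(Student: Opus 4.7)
The plan is to specialize Theorem~\ref{psS1} to the unary alphabet $\Sigma=\{a\}$ and enumerate the resulting languages by running through all choices of $(X,Y,b_1,\alpha,i)$ allowed in \eqref{vee} and \eqref{wedge}. First, I would catalogue the elementary building blocks in this restricted setting. Since $Y\subseteq X\subseteq\{a\}$, only three pairs $(X,Y)$ arise, and the parity languages $\Par(\{a\},Y,i)$ with $i\in\{0,1\}$ realize only the four languages $\varnothing$, $\{a\}^*$, $\even$, $\overline{\even}$. The indicator term $\Ind(\{a\},\{a\}\setminus X)$ equals $\varnothing$ when $X=\{a\}$ and $a^+$ when $X=\varnothing$. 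A unary solution language $\Sol(\{a\},b_1,\alpha)$ is governed by the single inequality $b_1x<\alpha$; inspecting the sign of $b_1$ and the value of $\alpha$ shows that it is exactly one of $\varnothing$, $\{a\}^*$, $\less_n$, or $\overline{\less_n}$ for some $n\ge 0$. The degenerate case $X=\varnothing$ contributes only the languages $\varnothing$, $\{\varepsilon\}=\less_0$, $a^+=\overline{\less_0}$, and $\{a\}^*$, all of which will also turn up in the $X=\{a\}$ case.

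Next I would form all intersections and unions prescribed by the theorem. For $X=\{a\}$, the $\bf\Lambda$-languages are the intersections $S\cap P$ with $S\in\{\varnothing,\{a\}^*,\less_n,\overline{\less_n}\}$ and $P\in\{\varnothing,\{a\}^*,\even,\overline{\even}\}$; the $\bf V$-languages are the unions $S\cup P$ of the same blocks (the indicator term vanishes). Writing out both lists and eliminating duplicates yields exactly the family \eqref{ups1} together with the complements of its members. Indeed, every language of \eqref{ups1} appears as some union $S\cup P$ (hence lies in $\bf V$), and each of its complements appears as the matching intersection by De~Morgan, e.g. $\overline{\less_n\cup\even}=\overline{\less_n}\cap\overline{\even}\in\bf\Lambda$. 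Regularity of every language in the final list is then immediate, since each is a Boolean combination of the finite language $\less_n$ and the regular language $\even$.

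The main obstacle I expect is bookkeeping at the boundary: making sure nothing extra appears from the edge values $\alpha=+\infty$, $\alpha\le 0$, and $b_1=0$ in the solution blocks, from the restriction $\alpha\ne+\infty$ in \eqref{wedge}, and from the degenerate case $X=\varnothing$ (which has to be carefully combined with the indicator $a^+$ to confirm it produces only the already-listed languages). Once these small case checks are dispatched, the equality $\bf\Lambda\cup\bf V=\eqref{ups1}\cup\overline{\eqref{ups1}}$ holds and the corollary follows.
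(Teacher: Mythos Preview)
Your proposal is correct and follows essentially the same approach as the paper: both specialize Theorem~\ref{psS1} to $\Sigma=\{a\}$, enumerate the admissible pairs $(X,Y)$, identify the unary building blocks $\Sol$, $\Par$, $\Ind$, and read off the resulting list from \eqref{vee}--\eqref{ps}. Your treatment is more explicit about the edge cases (the signs of $b_1$ and $\alpha$, the restriction $\alpha\ne+\infty$ in \eqref{wedge}, and the degenerate case $X=\varnothing$) than the paper's terse two-line argument, but there is no genuine methodological difference.
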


\begin{proof}
The possible cases are $X=\varnothing$, $X=Y=\{a\}$, and $(X=\{a\}\wedge Y=\varnothing)$. We have
$\Sol(\{a\},b,\alpha)\in\{\less_n,\overline{\less_n}\}$ for some $n$, $\Par(\{a\},\{a\},0)=\even$, $\Par(\{a\},\{a\},1)=\overline{\even}$, $\Ind(\{a\},\{a\})=a^+$. Now the required list of languages can be obtained directly from \eqref{vee}--\eqref{ps}.
\end{proof}

The obtained list coincides with the one from the proof of Theorem~\ref{thm:paz-2-state-pfa}: 
\begin{corollary} \label{ps1s2}
$ \sf UnaryPseudoS[1] = UnaryS[2]$. 
\end{corollary}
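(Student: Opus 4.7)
The plan is to check that the explicit list of languages comprising $\sf UnaryPseudoS[1]$ (given by the previous corollary) coincides, as a set of languages, with the explicit list of languages comprising $\sf UnaryS[2]$ (extracted from the case analysis in the proof of Theorem~\ref{thm:paz-2-state-pfa}). Both descriptions are already finite enough to match entry by entry, so no new computation is required; the proof should amount to a bookkeeping verification.

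First I would gather the two lists in a common form. From the proof of Theorem~\ref{thm:paz-2-state-pfa}, the languages in $\sf UnaryS[2]$ are $\varnothing$, $a^*$, $\even$, $\overline{\even}$, $\{\varepsilon\}$, $a^+$, $\less_n$ and $\overline{\less_n}$ for every $n\ge 0$, together with the four intersections $\less_n\cap\even$, $\less_n\cap\overline{\even}$, $\overline{\less_n}\cap\even$, $\overline{\less_n}\cap\overline{\even}$ and their complements. From the previous corollary, $\sf UnaryPseudoS[1]$ consists of the languages in \eqref{ups1} together with their complements; the four union-type entries $\less_n\cup\even$, $\less_n\cup\overline{\even}$, $\overline{\less_n}\cup\even$, $\overline{\less_n}\cup\overline{\even}$ have, by De Morgan, as complements exactly the four intersection-type languages appearing in the $\sf UnaryS[2]$ list.

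Next I would absorb the ``stray'' entries $\{\varepsilon\}$ and $a^+$ of the $\sf UnaryS[2]$ list into the parameterized families via the identities $\less_0=\{\varepsilon\}$ and $\overline{\less_0}=a^+$, and observe that $a^*=\overline{\varnothing}$. After these identifications, inclusion in both directions becomes a pointwise comparison: every $\less_n$, $\overline{\less_n}$, $\even$, $\overline{\even}$, and each of the four intersections $\less_n\cap\even$ etc. (and their complements, which are the unions in \eqref{ups1}) appears on both sides, and the only remaining atoms $\varnothing$ and $a^*$ are present in both lists.

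The main (and only) obstacle is a clean bookkeeping step to confirm that no language on either list is missed; in particular one has to read carefully the clause ``and the complements of the languages \eqref{ups1}'' from the previous corollary and check it produces exactly the intersection entries of the $\sf UnaryS[2]$ list. Since this is mechanical once the lists are aligned, I expect the proof to close by a single sentence stating the inclusion of each side in the other.
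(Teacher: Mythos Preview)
Your proposal is correct and matches the paper's approach: the paper simply notes that the list of languages in \eqref{ups1} (with complements) coincides with the list obtained in the proof of Theorem~\ref{thm:paz-2-state-pfa}. Your more explicit bookkeeping via De Morgan and the identifications $\less_0=\{\varepsilon\}$, $\overline{\less_0}=a^+$ is exactly the verification the paper leaves implicit.
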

Now we are going  to relate the 1-state pseudo stochastic languages to the classes of the Chomsky hierarchy. We need some additional notions. 
Real numbers $b_1,\ldots, b_n$ are called \emph{rationally equivalent} if there exist $\gamma\in\mathbb R$, $q_1,\ldots, q_n\in\mathbb Q$ such that $b_i=q_i\gamma$ for all $i$. Let $L=\Sol(\Sigma, b_1,\ldots, b_n,\alpha)$, $N\subseteq\Sigma$ be the set of letters corresponding to zero coefficients $b_i$. By \emph{decimation} $\dec(L)$ of $L$ we mean the language over $\Sigma\backslash N$ obtained from $L$ by deleting all letters of $N$ from all words (if $N=\varnothing$, then $\dec(L)=L$). 

\begin{lemma} \label{decim}
A solution language $L$ is regular if and only if $\dec(L)$ is regular. 
\end{lemma}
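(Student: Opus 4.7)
The plan is to express the relationship between $L$ and $\dec(L)$ as a letter-erasing homomorphism and then invoke the standard closure of the class of regular languages under homomorphisms and inverse homomorphisms. This reduces the lemma to a short bookkeeping argument.

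Concretely, I would introduce the homomorphism $h\colon\Sigma^*\to(\Sigma\backslash N)^*$ defined by $h(a)=\varepsilon$ for $a\in N$ and $h(a)=a$ for $a\notin N$. By the very definition of decimation, $\dec(L)=h(L)$, so if $L$ is regular then $\dec(L)$ is regular as a homomorphic image of a regular language. This handles the ``only if'' direction.

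For the ``if'' direction, I would verify the identity $L=h^{-1}(\dec(L))$. The crucial observation is that, by the definition of $N$, the coefficients $b_i$ associated with letters of $N$ vanish, so the inner product $(\vec b,\pi(w))$ is insensitive to the number of occurrences of $N$-letters in $w$. Consequently, writing $\vec b'$ for the vector obtained from $\vec b$ by deleting the zero entries, the inequality $(\vec b,\pi(w))<\alpha$ is equivalent to $(\vec b',\pi(h(w)))<\alpha$, which is exactly the condition for $h(w)$ to lie in $\dec(L)=\Sol(\Sigma\backslash N,\vec b',\alpha)$. Hence $w\in L$ iff $h(w)\in\dec(L)$, i.e., $L=h^{-1}(\dec(L))$, and regularity of $\dec(L)$ propagates to $L$ by closure under inverse homomorphisms.

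There is no real obstacle here; the only point that deserves an explicit sentence is that $\dec(L)$ is itself a solution language over $\Sigma\backslash N$ whose coefficient vector is the nonzero part of $\vec b$ — an immediate consequence of the fact that $L$ is Parikh closed and that $N$-letters contribute $0$ to the defining inner product.
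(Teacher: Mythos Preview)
Your proof is correct and is essentially the same argument as the paper's, phrased at a higher level of abstraction: the paper replaces $N$-labels by $\varepsilon$ in a DFA for $L$ (the standard construction behind closure under the erasing homomorphism $h$) and, for the converse, writes $L=\bigcup_{c_1\cdots c_k\in\dec(L)} N^*c_1N^*\cdots N^*c_kN^*$ and adds $N$-loops at every state of a DFA for $\dec(L)$ (precisely the construction for $h^{-1}$). Your identity $L=h^{-1}(\dec(L))$ is exactly this union, so the two proofs coincide once one unpacks the closure properties.
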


\begin{proof}
If $L$ is regular, one can take its recognizing DFA and replace all labels from $N$ by $\varepsilon$, getting a $\varepsilon$-NFA recognizing $\dec(L)$. For the converse, note that
$$
L=\bigcup_{w=c_1\cdots c_k\in\dec(L)}\!\!\! N^*c_1N^*c_2\cdots N^*c_kN^*.
$$
So, one can take a DFA recognizing $\dec(L)$ and add loops labeled by all letters from $N$ to each its state. The resulting automaton will recognize $L$. 
\end{proof}

The following lemma is crucial.

\begin{lemma} \label{solregcf}
1) A solution language is regular if and only if all its nonzero coefficients have the same sign.\\ 
2) A nonregular solution language is context-free if and only if its coefficients are rationally equivalent.
\end{lemma}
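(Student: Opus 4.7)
The plan is to handle both parts after first reducing to the case of nonzero coefficients via Lemma~\ref{decim}; the analogous statement for context-freeness ($L$ is CF iff $\dec(L)$ is CF) follows from the same proof, using closure of CFL under homomorphism (erasing letters with zero coefficient) for one direction and under inverse homomorphism for the other. After this reduction, let $I^+,I^-$ denote the index sets of positive and negative coefficients.

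For Part~1 I would argue as follows. If all $b_i>0$, then $\sum b_i x_i$ is bounded below by $0$, so the Parikh set of $L$ is empty, finite, or all of $\mathbb N^n$ depending on $\alpha$; in every case $L$ is regular, and the case of all $b_i<0$ is symmetric via the complement. Conversely, if $i\in I^+$ and $j\in I^-$ exist, the Parikh-closed intersection $L\cap a_i^*a_j^*$ equals $\{a_i^m a_j^n:b_i m+b_j n<\alpha\}$; the threshold $n>(b_i m-\alpha)/|b_j|$ takes infinitely many distinct integer ceilings as $m$ varies, so the prefixes $a_i^m$ fall into pairwise distinct Myhill--Nerode classes, and $L\cap a_i^*a_j^*$, hence $L$, is non-regular.

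For the $(\Leftarrow)$ direction of Part~2, assume $b_i=q_i\gamma$ with $q_i\in\mathbb Q$. Clearing denominators (and reversing the inequality if $\gamma<0$) reduces the defining condition to $\sum c_i x_i\le M$ with integer $c_i$ and integer $M$, since the left side is always an integer and a strict real bound can be rounded. I would then build a one-counter pushdown automaton that stores $|S|$ in unary on the stack, where $S=M-\sum c_i x_i'$ is the running signed integer for the prefix read so far; the control carries a two-valued ``sign bit'', each input letter is processed through $|c_i|$ intermediate states performing unit updates, and the sign bit is flipped precisely when the counter passes through zero. Acceptance is conditioned on finishing in the nonnegative regime. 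Since one-counter languages are context-free, so is $L$.

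For the $(\Rightarrow)$ direction of Part~2, assume $L$ is non-regular but the coefficients are not rationally equivalent. By Part~1 both $I^+$ and $I^-$ are nonempty, and a short argument shows that some $i\in I^+$, $j\in I^-$ must satisfy $b_i/b_j\notin\mathbb Q$: otherwise, fixing any $j_0\in I^-$, every remaining $b_k$ would lie in $\mathbb Q\cdot b_{j_0}$, making all coefficients rationally equivalent. For such $i,j$, the intersection $L\cap a_i^*a_j^*=\{a_i^m a_j^n:m<(|b_j|/b_i)n+\alpha/b_i\}$ is an irrational-slope bounded two-letter language, which fails the pumping lemma for CFLs: pumping any decomposition $uvxyz$ of a boundary word shifts the Parikh vector along a rational direction, eventually violating the strict irrational inequality. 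Since CFLs are closed under intersection with regular sets, $L$ itself is not CF. The main obstacle I anticipate is writing the sign-flipping one-counter machine in $(\Leftarrow)$ precisely enough, including the bookkeeping around the zero crossing for multi-unit updates; the pumping argument in $(\Rightarrow)$ is standard but also requires a careful choice of the pumped word so that the shift lands on the ``wrong'' side of the irrational line.
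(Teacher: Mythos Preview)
Your proposal is correct. Part~1 and the $(\Leftarrow)$ direction of Part~2 follow the paper almost exactly (decimation reduction, finite/cofinite observation, integer one-counter automaton); your additional remark that $L=h^{-1}(\dec(L))$ for the letter-erasing homomorphism $h$, giving the CF version of Lemma~\ref{decim} via closure under inverse homomorphism, is a clean way to justify the reduction.

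The genuine divergence is in the $(\Rightarrow)$ direction of Part~2. The paper does not pump. Instead it invokes Parikh's theorem: if $L$ were context-free, some regular $L'$ would share its Parikh set; in the DFA for $L'$ every cycle label $z$ must satisfy $(\vec b,\pi(z))\le 0$, so the maximum of $(\vec b,\pi(w))$ over $w\in L'$ is attained on a short word. But the existence of $b_i>0>b_j$ with $b_i/b_j$ irrational forces $\sup_{w\in L}(\vec b,\pi(w))=\alpha$ with the supremum \emph{not} attained, a contradiction. Your route---intersect with $a_i^*a_j^*$ and apply the CFL pumping lemma to a boundary word---is more elementary in that it avoids Parikh's theorem, but it is also more delicate than you indicate: you must choose $n$ so that the slack $\{\beta n+\gamma\}$ (where $\beta=|b_j|/b_i$) is smaller than $\min\{\beta t-s:0\le s,t\le p,\ s+t\ge1,\ s<\beta t\}$, which requires the density of $\{\beta n\}$ in $[0,1)$, and you must separately dispose of decompositions where $v$ or $y$ itself straddles the $a_i/a_j$ boundary. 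All of this goes through, but the Parikh-based argument is noticeably shorter and more conceptual, while yours has the advantage of being self-contained.
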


\begin{proof}
First we note that the decimation of a solution language is a solution language defined by the same linear inequality in a vector space of a smaller dimension. Let $D$ be the decimation of some solution language. The Parikh vectors of its words satisfy an inequality $b_1x_1+\cdots b_kx_k<\alpha$, where all coefficients $b_i$ are nonzero. If all these coefficients are positive [resp. negative], then $D$ is finite [resp., cofinite]. By Lemma~\ref{decim}, in this case any solution language with the decimation $D$ is regular. Now assume that some coefficients have different signs; w.l.o.g., $b_1>0>b_2$. Let the letters $a_1$ and $a_2$ correspond to $b_1$ and $b_2$, respectively. If $D$ is regular, then it is recognized by a DFA $\cal A$ with, say, $t$ states. This DFA accepts all words from $D$ including all words $a_1^{x_1}a_2^{x_2}$ such that $b_1x_1+b_2x_2<\alpha$. Such words exist for any $x_1$, in particular, for $x_1>t$. Then $\cal A$ has a cycle labeled by some $a_1^i$, $i\le t$. Iterating this cycle appropriate number of times, we will get a word of the form $a_1^{x_1+ri}a_2^{x_2}$ which is recognized by $\cal A$ but does not belong to $D$. Thus, $D$ is not regular, and a reference to Lemma~\ref{decim} finishes the proof of statement 1.

\smallskip
Now we turn to the proof of statement 2. Take a solution language $L$ with the decimation $D=\Sol(\Sigma, b_1,\ldots, b_k,\alpha)$. Since $L$ is not regular, we know from the above that some $b_i$'s have different signs; w.l.o.g., $b_1>0>b_2$.  

Both $L$ and $D$ are determined by the inequality $b_1x_1+\cdots b_kx_k< \alpha $.
If the coefficients are rationally equivalent, we transform this inequality, dividing both sides by the common irrational factor of all coefficients and than multiplying both sides by the least common multiple of denominators of the obtained rational coefficients. As a result, we get a linear inequality
$$
\hat b_1x_1+\cdots \hat b_kx_k< \hat\alpha
$$
with integer coefficients and the same set of solutions. Finally, we replace $\hat\alpha$ by $\lceil\hat\alpha\rceil$ preserving the set of \emph{integer} solutions of the inequality. To check whether the Parikh vector of a word satisfies the resulting diophantine inequality, one can implement a counter in the stack of a pushdown automaton. Hence, the solution languages with rationally equivalent coefficients are context-free.

Now consider a solution language $L$ having rationally non-equivalent coefficients. If any positive coefficient is equivalent to any negative one, then all coefficients are equivalent; so, $L$ has a pair of rationally non-equivalent coefficients of different signs, say, $b_1$ and $b_2$. Then the value of the expression $b_1x_1+b_2x_2$ for the word $a_1^{x_1}a_2^{x_2}\in L$ can be arbitrarily close from below to $\alpha$. Thus, $(\vec b,\pi(w))$ for $w\in L$ can be arbitrarily close from below to $\alpha$ (and the supremum cannot be reached by the definition of solution language). Let us show that this is impossible for context-free languages. Aiming at a contradiction, assume that $L$ is context-free. By Parikh's Theorem \cite{Par66} there exists a regular language $L'$ such that $\pi(L')=\pi(L)$. Since $L$ is infinite, $\pi(L)$ and $L'$ are infinite as well. Consider the minimal DFA $\cal A$ with partial transition function, recognizing $L'$. This DFA must contain cycles; let $z$ be the label of some cyclic walk in the graph of $\cal A$. Then for some $u,v\in\Sigma^*$ the language $L'$ contains the words $uz^tv$ for all nonnegative integers $t$. Hence we have
$$
(\vec b,\pi(uz^tv))=(\vec b,\pi(uv))+t(\vec b,\pi(z))<\alpha,
$$
implying $(\vec b,\pi(z))\le 0$. Since this inequality holds for the label of any cyclic walk, the function $(\vec b,\pi(w))$ reaches its maximum for $w\in L'$ on some short word $w$. Thus, the maximum of $(\vec b,\pi(w))$ for $w\in L$ is also reachable, a contradiction. Hence, $L$ is not context-free.
\end{proof}

Now we are able to relate $\sf PseudoS[1]$ to the classes of the Chomsky hierarchy.

\begin{theorem}
1) A 1-state pseudo stochastic language is regular if and only if the logarithms of absolute values of all nonzero transition numbers of the generating 1-state GFA have the same sign.\\ 
2) A nonregular 1-state pseudo stochastic language is context-free if and only if the logarithms of absolute values of all nonzero transition numbers of the generating 1-state GFA are rationally equivalent.
\end{theorem}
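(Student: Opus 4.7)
The plan is to push both statements through the canonical form of Theorem~\ref{psS1} and reduce them to Lemma~\ref{solregcf}. By that theorem, any $L\in\sf PseudoS[1,\Sigma]$ equals either $L=\Sol(X,\vec b,\alpha)\cap\Par(X,Y,i)$ (form~$\bf\Lambda$) or $L=\Sol(X,\vec b,\alpha)\cup\Par(X,Y,i)\cup\Ind(\Sigma,\Sigma\setminus X)$ (form~$\bf V$). Tracing the construction in the proof of Theorem~\ref{psS1}, one has $X=\{a_j:A_j\ne 0\}$ and the tuple $\vec b$ equals $(\pm\log|A_j|)_{j}$ with a single common sign. Thus nonzero $b_j$'s and nonzero $\log|A_j|$'s share sign patterns up to a global flip, and are rationally equivalent together; proving the theorem amounts to showing that $L$ is regular (resp.\ context-free) iff the Sol component is regular (resp.\ regular or context-free).

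To collapse form~$\bf V$ into form~$\bf\Lambda$, observe that $\Par(X,Y,1{-}i)\subseteq X^*$ is disjoint from $\Par(X,Y,i)$ and from $\Ind(\Sigma,\Sigma\setminus X)$ (the latter because indicator words have a letter outside $X$). Hence, in form~$\bf V$, $L\cap\Par(X,Y,1{-}i)=\Sol\cap\Par(X,Y,1{-}i)$, which is form~$\bf\Lambda$ with the same Sol component; since intersection with a regular language preserves both regularity and context-freeness, it suffices to treat form~$\bf\Lambda$. The ``if'' direction is then immediate from Lemma~\ref{solregcf} and the regularity of parity languages: if the Sol component is regular (resp.\ context-free), so is $L=\Sol\cap\Par$.

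The substantive work is the ``only if'' direction for form~$\bf\Lambda$. Suppose the sign condition (resp.\ rational equivalence condition) on $\log|A_j|$ fails. For part~1, Lemma~\ref{solregcf}(1) supplies two nonzero coefficients $b_p>0>b_q$; for part~2, assuming in addition that $L$ is nonregular (so part~1 already gives opposite signs) and that coefficients are not all rationally equivalent, the pairing argument from the proof of Lemma~\ref{solregcf}(2) yields a pair $b_p,b_q$ of opposite signs that are also rationally non-equivalent. I then intersect $L$ with $a_p^*a_q^*$: the result $L'$ consists of words $a_p^xa_q^y$ with $b_px+b_qy<\alpha$ and $(x,y)$ in a prescribed residue class of $\mathbb{Z}_{\ge 0}^2$ modulo~$2$ (determined by $Y\cap\{a_p,a_q\}$ and~$i$). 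For part~1, a Myhill--Nerode count on $a_p^*a_q^*$ forces infinitely many equivalence classes, because the threshold $y>(b_px-\alpha)/|b_q|$ grows linearly with $x$; this contradicts regularity of $L'$, hence of $L$. For part~2, $L'$ is context-free as an intersection of the context-free $L$ with a regular language, and repeating the Parikh-theorem argument from the proof of Lemma~\ref{solregcf}(2) on $L'$ yields a contradiction: the supremum of $b_px+b_qy$ on $L'$ would have to be attained, but rational non-equivalence makes this supremum equal to $\alpha$ and unattained.

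The main technical obstacle is the two-letter restriction step, which enables reusing the machinery of Lemma~\ref{solregcf} inside $L'$; everything else is bookkeeping with Theorem~\ref{psS1} and standard closure properties of the regular and context-free classes.
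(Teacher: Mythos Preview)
Your plan is exactly the paper's: invoke Theorem~\ref{psS1} to put $L$ in one of the two canonical forms, observe that the coefficients of the $\Sol$ component are $\pm\log|A_j|$, and then appeal to Lemma~\ref{solregcf}. The paper's proof is two sentences and simply asserts that in both forms $L$ is regular (resp.\ context-free) iff the $\Sol$ part is; you are supplying the argument behind that assertion, which is a reasonable thing to do.

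There is, however, a small but genuine gap in your ``only if'' step for form~$\bf\Lambda$. After you restrict to $a_p^*a_q^*$, you claim that $L'=\{a_p^xa_q^y:b_px+b_qy<\alpha,\ (x,y)\text{ in a fixed residue class mod }2\}$. But the residue class is determined by $Y\cap\{a_p,a_q\}$ and $i$, and if $Y\cap\{a_p,a_q\}=\varnothing$ while $i=1$, that class is \emph{empty}: every word in $\{a_p,a_q\}^*$ has zero letters from $Y$, so none satisfies $\Par(X,Y,1)$. Then $L'=\varnothing$ and both your Myhill--Nerode count and your Parikh-supremum argument become vacuous. The same issue propagates through your $\bf V\to\bf\Lambda$ reduction when $Y=\varnothing$ and $1-i=1$.

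The fix is cheap. If $Y\ne\varnothing$, pick any $a_r\in Y$ and work inside $a_r\,a_p^*a_q^*$ (or $a_p^*a_q^*a_r$): the single $a_r$ flips the parity, the inequality becomes $b_px+b_qy<\alpha-b_r$, and both arguments go through unchanged. If $Y=\varnothing$ and $i=1$ in form~$\bf\Lambda$, then $L=\varnothing$ is regular and there is nothing to prove; correspondingly, in form~$\bf V$ with $Y=\varnothing$ and $i=0$ one has $\Par(X,\varnothing,0)=X^*$, hence $L=\Sigma^*$, again regular. Once you handle these degenerate parities explicitly, your argument is complete and matches the paper's intended route.
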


\begin{remark}
It is easy to check that the properties ``to have the same sign'' and ``to be rationally equivalent'' for logarithms are independent of the base of the logarithm.
\end{remark}

\begin{proof}
By \eqref{ps}, a language $L\in \sf PseudoS[1]$ is given either by \eqref{vee} or by \eqref{wedge}. In both cases, $L$ is regular [context-free] if and only if the corresponding solution language is regular [resp., context-free]. As was shown in the proof of Theorem~\ref{psS1}, the coefficients of this solution language are logarithms of absolute values of the transition numbers of the GFA recognizing $L$. The result now follows from Lemma~\ref{solregcf}.
\end{proof}

\begin{remark}
From the proof of Lemma~\ref{solregcf} one can conclude that if a 1-state pseudo stochastic language is context-free, it is deterministic context-free.
\end{remark}

\section{Inclusive and exclusive cutpoint languages}
\label{sec:incl-excl}

For a given automaton $ \cal M $ and a cutpoint $ \lambda \in [0,1] $, the languages $ L(\mathcal{M},{=}\lambda) $ and $ L(\mathcal{M},{\neq}\lambda) $ are  defined by 
\[
	L(\mathcal{M},{=}\lambda) = \{ w \in \Sigma^* \mid f_{\mathcal{M}}(w) = \lambda \}
	,\
	L(\mathcal{M},{\neq}\lambda) = \{ w \in \Sigma^* \mid f_{\mathcal{M}}(w) \neq \lambda \},
\]
where $ \lambda \in \mathbb{R} $ for GFAs and $ \lambda \in [0,1] $ for PFAs and QFAs. The language $ L(\mathcal{M},{=}\lambda) $ [resp., $L(\mathcal{M},{\neq}\lambda)$] is said to be recognized by $ \cal M $ with inclusive [resp., exclusive] cutpoint $ \lambda $. (Note that if a language is recognized by an automaton with inclusive cutpoint $\lambda$, then its complement is recognized by the same automaton with exclusive cutpoint $\lambda$.) Such languages recognized by GFAs [PFAs, QFAs] are called inclusive and exclusive pseudo stochastic [resp., stochastic, quantum automaton] languages. The corresponding class names are given below:
\begin{equation*}
	\begin{array}{|l|l|l|l|l|}
		\hline 
		\multicolumn{1}{|c|}{ \mathbf{model} } & \multicolumn{1}{r}{\mathbf{general} } & \multicolumn{1}{l}{\mathbf{alphabet}} & \multicolumn{1}{|r}{ \mathbf{unary} } & \multicolumn{1}{l|}{ \mathbf{alphabet} }
		\\ \hline \hline
		\mbox{GFA} & \sf PseudoS^{=} & \sf PseudoS^{\neq} & \sf UnaryPseudoS^{=} & \sf UnaryPseudoS^{\neq} 
		\\ \hline
		\mbox{PFA} & \sf S^{=} & \sf S^{\neq} & \sf UnaryS^{=} & \sf UnaryS^{\neq}
		\\ \hline
		\mbox{QFA} & \sf QAL^{=} & \sf QAL^{\neq} & \sf UnaryQAL^{=} & \sf UnaryQAL^{\neq}
		\\ \hline
		\mbox{MCQFA} & \sf MCL^{=} & \sf MCL^{\neq} & \sf UnaryMCL^{=} & \sf UnaryMCL^{\neq}
		\\ \hline
	\end{array}
\end{equation*}

It is known that GFAs, PFAs, and QFAs define the same class of languages with inclusive and exclusive cutpoints \cite{YS10A}
\begin{equation}\label{eqneq}
	\sf  PseudoS^{=} = S^{=} = QAL^{=} \mbox{ and } PseudoS^{\neq} = S^{\neq} = QAL^{\neq} ,
\end{equation}
where inclusive and exclusive cutpoint languages form different classes \cite{Paz71} and we still do not know whether their intersection, which includes all regular languages, contains a non-regular language. On the unary alphabet, both classes coincide with regular languages; see the proof of Theorem 5.1 in \cite{SS78}.  

In Sect.~\ref{sec:ie} we find the cardinality of the classes \eqref{eqneq} (which is the same, because complementation is a bijection between the inclusive and exclusive classes), thus solving an open problem stated in \cite{YS10A}, and then relate $\sf MCL^{=}$ and $\sf MCL^{\neq}$ to these classes. Then in Sect.~\ref{sec:ie-few} we analyze inclusive and exclusive languages having few states.

\subsection{Two problems on the classes of inclusive and exclusive languages} \label{sec:ie}

What happens if we fix the cutpoint to a specific value? The classes $ \sf PseudoS^{=} $ and $ \sf PseudoS^{\neq} $ remain the same if we require the cutpoint to be any fixed real number; the same result holds for  $ \sf QAL^{=} $, $ \sf QAL^{\neq} $ and any cutpoint inside $ [0,1] $ \cite{YS11A}. On the other hand, although $ \sf S^{=} $ and  $ \sf S^{\ne} $ do not change when a cutpoint from $ (0,1) $ is fixed, the choice of 0 or 1 as a cutpoint shrinks each of these classes to the class of regular languages \cite{Bu67,Tur68,Ma93}. Note that for PFAs and QFAs the cutpoint 0 is equivalent to the exclusive cutpoint 0. PFAs with cutpoint 0 are equivalent to nondeterministic finite automata. Similarly, one can define nondeterministic quantum finite automata (NQFAs) as QFAs with cutpoint 0 \cite{YS10A}. So, the class of languages defined by NQFAs, named $ \sf NQAL $, is equivalent to $ \sf QAL^{\neq}  $ \cite{YS10A}. This connection lets us to prove

\begin{theorem}
	The cardinality of $ \sf QAL^{\neq} $ is countable. 
\end{theorem}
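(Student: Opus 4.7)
The plan is to use the equivalence $\sf QAL^{\neq}=\sf NQAL$ from \cite{YS10A} recalled above the theorem, where $\sf NQAL$ is the class of supports $\{w:f_{\mathcal{M}}(w)>0\}$ of QFAs $\mathcal{M}$ (equivalently, $\{w:f_{\mathcal{M}}(w)\ne 0\}$, since QFA accepting probabilities are nonnegative). It therefore suffices to bound the number of possible such supports by countably many.

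Fix the number of states $n$. After splitting each complex entry of a QFA into its real and imaginary parts, the set $X_n$ of valid $n$-state QFAs becomes a real-semialgebraic subset of some $\mathbb{R}^{N_n}$, cut out by polynomial equations with rational coefficients that express the normalization constraints on the superoperators and the measurement. For every word $w\in\Sigma^*$, the accepting probability $f_\mathcal{M}(w)$ is a polynomial $f_w$ with rational coefficients in these real parameters, so $Z_w:=\{\mathcal{M}\in X_n:f_\mathcal{M}(w)=0\}$ is a $\mathbb{Q}$-Zariski-closed subset of $X_n$. To every language $L$ recognized by some $\mathcal{M}\in X_n$, associate $V_L:=X_n\cap\bigcap_{w\notin L}Z_w$.

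The central observation is that this assignment is injective. If $L_1\ne L_2$, say $w_0\in L_1\setminus L_2$, then any NQFA $\mathcal{M}_1\in X_n$ recognizing $L_1$ satisfies $f_{\mathcal{M}_1}(w)=0$ for all $w\notin L_1$, so $\mathcal{M}_1\in V_{L_1}$, while $f_{\mathcal{M}_1}(w_0)>0$, so $\mathcal{M}_1\notin Z_{w_0}$; since $V_{L_2}\subseteq Z_{w_0}$ by definition, $V_{L_1}\ne V_{L_2}$. Moreover, Hilbert's basis theorem applied to the Noetherian ring $\mathbb{Q}[X_1,\ldots,X_{N_n}]$ implies that the ideal generated by $\{f_w:w\notin L\}$ is already generated by some finite subfamily, so $V_L$ is cut out inside $X_n$ by finitely many polynomials with rational coefficients. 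There are only countably many such finite systems, hence only countably many possible $V_L$, and therefore only countably many languages in $\sf NQAL$ recognizable by $n$-state QFAs. Taking the union over all $n\in\mathbb{N}$ preserves countability and finishes the proof.

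The main obstacle is the appearance of complex conjugates in the QFA normalization $\sum_iE_{\sigma,i}^{\dagger}E_{\sigma,i}=I$, which prevents treating everything inside a complex polynomial ring; the remedy is the real/imaginary split mentioned above, which must be performed carefully so that the resulting $f_\mathcal{M}(w)$ remains a polynomial with rational coefficients in the real parameters. Once that setup is in place, the injectivity-plus-Noetherianity step is routine.
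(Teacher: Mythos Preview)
Your proof is correct and takes a genuinely different route from the paper's. The paper's argument is essentially two lines: $\mathsf{NQAL}\subseteq\mathsf{NQP}$, and by Yamakami--Yao \cite{YY99} one has $\mathsf{NQP}=\mathsf{coC_{=}P}$; the latter class is defined by classical nondeterministic Turing machines, of which there are only countably many, so $\mathsf{QAL^{\neq}}$ is countable.

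Your approach bypasses this complexity-theoretic detour. You parameterize QFAs of a fixed discrete shape by real coordinates, observe that each $f_{\mathcal M}(w)$ is an integer-coefficient polynomial in those coordinates (the worry about conjugation is resolved exactly as you describe: after separating real and imaginary parts, the entries of $E^{\dagger}$ are linear in the same variables, and the diagonal of the Hermitian $\rho_{|w|}$ is automatically real), and then combine Noetherianity of $\mathbb{Q}[X_1,\ldots,X_{N_n}]$ with the injectivity of $L\mapsto V_L$ to bound the number of languages. This is more self-contained --- it needs only Hilbert's basis theorem rather than the nontrivial equality $\mathsf{NQP}=\mathsf{coC_{=}P}$ --- and incidentally yields the sharper statement that already for each fixed state number only countably many exclusive quantum-automaton languages occur. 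One small addendum: the ambient dimension $N_n$ depends not only on $n$ but also on the alphabet $\Sigma$, the numbers $l_\sigma$ of operation elements, and the choice of initial and accepting states; your final ``union over all $n\in\mathbb N$'' should range over all of these discrete data, but that is still a countable union and leaves the argument intact.
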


\begin{proof}
	It is clear that $ \sf NQAL (= QAL^{\neq}) $  is a subset of the class $ \sf NQP $ consisting of languages recognized by polynomial-time nondeterministic quantum Turing machines. In \cite{YY99}, it was shown that $ \sf NQP $ (defined with arbitrary complex numbers) is equivalent to  $ \sf coC_{=}P $,  the class of decision problems solvable by polynomial time nondeterministic Turing machines (NTMs) with the property that the number of accepting paths is different from the number of rejecting paths if and only if the answer is ``yes''. Since NTMs are ``classical'' Turing machines, their number is countable, as well as the cardinality of $ \sf coC_{=}P $.
\end{proof}


Now we relate the classes $ \sf MCL^{\neq} $ and $ \sf MCL^{=} $ to the classes \eqref{eqneq}.

\begin{theorem}
	\label{thm:mcqfa-cutpoint}
	Any language $ L \in \mathsf{MCL^{\neq}} $ can be defined by a MCQFA with exclusive cutpoint 0.
\end{theorem}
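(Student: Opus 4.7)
The plan is to convert an exclusive cutpoint $\lambda$ into the exclusive cutpoint $0$ by designing an MCQFA $\mathcal M'$ whose accepting probability is proportional to $\big(f_{\mathcal M}(w)-\lambda\big)^2$. Since this quantity vanishes precisely when $f_{\mathcal M}(w)=\lambda$, such an $\mathcal M'$ recognizes $L$ with exclusive cutpoint $0$.

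The key ingredient is to realize $p(w):=f_{\mathcal M}(w)$ itself, not just its square, as the amplitude of a fixed basis vector in some larger MCQFA. This is achieved by running $\mathcal M$ in parallel with its entry-wise complex conjugate $\overline{\mathcal M}$ (the machine with unitaries $\overline{U_\sigma}$): in the tensor product $\mathcal M\otimes\overline{\mathcal M}$ fed the same input $w$, the overlap between the post-input state $(U_w\ket{v_0})\otimes\overline{(U_w\ket{v_0})}$ and the fixed unit vector $\ket{\Psi_a}=|Q_a|^{-1/2}\sum_{q\in Q_a}\ket{q}\ket{q}$ equals $p(w)/\sqrt{|Q_a|}$. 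Tensoring with the conjugate machine thus promotes the modulus-squared real number $p(w)$ into a genuine amplitude, which can then be interfered against the real constant $\lambda$ by ordinary linear algebra.

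The new machine $\mathcal M'$ is built by appending one fresh basis state $\ket{*}$ to $\mathcal H_{\mathcal M}\otimes\mathcal H_{\mathcal M}$ and using both end-markers for pre- and post-processing. Preprocessing prepares the initial state $c_1\ket{v_0}\ket{v_0}+c_2\ket{*}$ with $c_2/c_1=\lambda/\sqrt{|Q_a|}$ and $c_1^2+c_2^2=1$, so that the ``reference'' amplitude at $\ket{*}$ matches the $\ket{\Psi_a}$-amplitude exactly when $p(w)=\lambda$. Each input letter acts as $(U_\sigma\otimes\overline{U_\sigma})\oplus 1$, which is unitary and preserves the coefficient at $\ket{*}$. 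Post-processing applies a Hadamard rotation on the two-dimensional subspace $\mathrm{span}\{\ket{\Psi_a},\ket{*}\}$ and the identity on its orthogonal complement, and the final measurement accepts iff the outcome is $\ket{*}$. A short computation then gives $f_{\mathcal M'}(w)=(p(w)-\lambda)^2/\big(2(|Q_a|+\lambda^2)\big)$, which lies in $[0,1]$ and vanishes iff $p(w)=\lambda$, exactly as desired.

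The main conceptual hurdle is the first step. A naive attempt that interferes the amplitude $\alpha(w)=\langle q_a|U_w|v_0\rangle$ directly against a constant $\sqrt{\lambda}$ would produce an accepting probability of the form $|\alpha(w)-\sqrt{\lambda}|^2$, which depends on the phase of $\alpha(w)$ and therefore cannot detect the phase-independent condition $|\alpha(w)|^2=\lambda$. The $\mathcal M\otimes\overline{\mathcal M}$ trick removes this phase sensitivity by extracting $p(w)=|\alpha(w)|^2$ as a real nonnegative amplitude; once this is in place, checking unitarity of $(U_\sigma\otimes\overline{U_\sigma})\oplus 1$ and of the final rotation, as well as verifying the algebra for $f_{\mathcal M'}$, is routine.
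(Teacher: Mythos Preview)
Your proof is correct and follows essentially the same approach as the paper: run $\mathcal M$ in tensor with its complex conjugate so that $f_{\mathcal M}(w)$ appears as a real amplitude along a fixed vector, adjoin one extra basis state carrying a reference amplitude proportional to $\lambda$, and use a final rotation to interfere the two, yielding an accepting probability proportional to $(f_{\mathcal M}(w)-\lambda)^2$. The only differences are cosmetic repackaging: the paper encodes the constant $\lambda$ in the first row of $U_{\dollar}$ and uses equal weights $1/\sqrt{2}$ in the initial state, whereas you encode $\lambda$ in the initial amplitudes $c_1,c_2$ and use a Hadamard on $\mathrm{span}\{\ket{\Psi_a},\ket{*}\}$ for the final step; both choices lead to the identical accepting probability $(p(w)-\lambda)^2/\big(2(|Q_a|+\lambda^2)\big)$. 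One small caveat: your initial state should strictly be $c_1\,\ket{v_0}\otimes\overline{\ket{v_0}}+c_2\ket{*}$ to match the evolution $U_\sigma\otimes\overline{U_\sigma}$ when $\ket{v_0}$ is complex, though this is moot once one takes $\ket{v_0}$ to be a computational basis state as in the base MCQFA model.
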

\begin{proof}
	Let $ \mathcal{M} = (Q,\Sigma,\{U_{\sigma} \mid \sigma \in \Sigma \},\ket{v_0},P) $ be a MCQFA with  $n$ states and the left end-marker, defining the language $ L $ with exclusive cutpoint $ \lambda $. If $\lambda=0$,  we are done, so let $\lambda\in (0,1] $. Thus, for an input $w \in \Sigma^*$, $ f_{\mathcal{M}}(w) \neq \lambda $ if $ w \in L $ and  $ f_{\mathcal{M}}(w) = \lambda $ if $ w \notin L $. Since the left end-marker is used, $ \ket{v_0} $ can be an arbitrary quantum state. If $m$ is the length of $w$, the quantum state of $ \mathcal{M} $ before the measurement is
	\[
		\ket{v_m} = U_{w_{m}}U_{w_{m-1}} \cdots U_{w_1} \ket{v_0} = \left( \begin{array}{c}
			\alpha_1 \\ \alpha_2 \\ \vdots \\ \alpha_n
\end{array}		 \right)
	\] 
	Applying the measurement, we obtain $ \ket{\tilde{v_a}} =P_a \ket{v_m} $. Note that $ \ket{\tilde{v_a}}  $ can be  obtained from $ \ket{v_m} $ by replacing certain entries with zeros. Namely, the $ j $th entry is replaced by 0 if the $(j,j)$th entry of $ P_a $ is 0 and preserved if the $ (j,j) $th entry of $ P_a $ is 1. Let $A \subseteq \{1,\ldots,m\} $ be the set of indices of the preserved entries. We refer to $ \{ q_j \in Q \mid j \in A \} $ as to the set of accepting states.
	
	For the accepting probability of $ w $ by $ \mathcal{M} $ one has 
	\[
f_{\mathcal{M}}(w) = 	\braket{\tilde{v_a}}{\tilde{v_a}}	= \sum_{j \in A} \alpha_j^*\alpha_j.
	\]

	We first construct an intermediate MCQFA $ \mathcal{M'} $ which executes two copies of $ \mathcal{M} $ in parallel. By definition, $ \mathcal{M'} = (Q',\Sigma,\{ U'_{\sigma} \mid \sigma \in \Sigma \},\ket{v'_0},P') $ is a tensor product of $ \mathcal{M} $ with itself:
	\begin{itemize}
		 \item $ Q' = Q \times Q $,
		 \item $ U_\sigma' = U_\sigma^* \otimes U_\sigma $,
		 \item $ \ket{v'_0} = \ket{v^*_0} \otimes \ket{v_0} $, and
		 \item $ P' $ is any measurement operator. 
	 \end{itemize} 
	 The quantum state of $\mathcal{M'}$ before the measurement is $ \ket{v'_m} = \ket{v^*_m} \otimes \ket{v_m} $, i.e.
	\[
\ket{v'_m} =  (U^*_{w_{m}} \otimes U_{w_{m}}) (U^*_{w_{m-1}} \otimes U_{w_{m-1}}) \cdots (U^*_{w_1} \otimes U_{w_1} )  (\ket{v^*_0} \otimes \ket{v_0}).
	\]
	Note that the entries of $ \ket{v'_m} $ form the set $ \{ \alpha^*_j\alpha_l \mid 1 \leq j,l \leq n \} $. 
	
	Now we define the target MCQFA $ \mathcal{M''} = (Q'',\Sigma,\{ U''_{\sigma} \mid \sigma \in \Sigma \cup \{\dollar\} \},\ket{v''_0},P'') $, which also uses the right end-marker:
	\begin{itemize}
		\item $ Q'' $ consists of $ Q' $ and one more state (the first one);
		\item the new initial state is 
		$
			\ket{v''_0} =
			\frac{1}{\sqrt{2}} \left(
				\begin{array}{c}
					1 
					\\				
					\ket{v'_0} 
				\end{array}
			\right)
	$ (its norm obviously equals 1);
	\item for each $ \sigma \in \Sigma $, 
	$
		U''_{\sigma} = 
		\left(
		\begin{array}{c|c}
				1 & 0 \cdots 0
				\\ \hline		
				0 &
				\\[-3pt]
				\vdots & U'_{\sigma}
				\\
				0 &
			\end{array}
		\right)
	$,
	and $ U_{\dollar} $ is described below; 
	\item $ P''_a $ has a single 1, located in the $ (1,1) $th entry. 
	\end{itemize}	
	A straightforward calculation shows that $ \ket{v''_m} = \frac{1}{\sqrt{2}} \left(
				\begin{array}{c}
					1 
					\\				
					\ket{v'_m} 
				\end{array}
			\right) $. 
By the definition of $ P''_a $, the accepting probability is the square of modulus of the first entry of $ U''_\dollar \ket{v''_m} $. Then, only the first row of $ U''_\dollar $ is essential, and so the remaining entries of this matrix can be arbitrary. We define the first row of $ U''_{\dollar} $ as 
	\[
		c\cdot ( - \lambda ~~ u ),
	\]
	where $ u  $ is an $ n^2 $-dimensional 0-1 row vector and the coefficient $ c $ sets the norm of the whole vector to 1. Here $ u $ is a kind of filtering such that its inner product with $ \ket{v'_m} = \ket{v^*_m} \otimes \ket{v_m} $ equals the sum
	\[
		\sum_{j \in A} \alpha^*_j \alpha_j = f_{\mathcal{M}}(w).
	\] 
	Thus, the first entry of $ U''_\dollar \ket{v''_m} $ is $ \frac{c}{\sqrt{2}} ( f_{\mathcal{M}} - \lambda ) $ and then the accepting probability of $ w $ by $ \mathcal{M''} $ is 
	\[
		 f_{\mathcal{M''}} (w) = \frac{c^2}{2} ( f_{\mathcal{M}} - \lambda )^2.
	\]
	If $ w \in L $, then the new accepting probability is nonzero, and, if $ w \notin L $, then it is zero. Therefore, $ L $ is defined by the MCQFA $ \cal M'' $ with exclusive curpoint 0. Remark that, as pointed before, any MCQFA with two end-markers is equivalent to a MCQFA with one end-marker. 
\end{proof}

\begin{corollary}
	\label{cor:MCL-classes}
	The class $ \sf MCL^{\neq} $ contains no non-empty finite languages.
\end{corollary}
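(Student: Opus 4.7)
The plan is to combine Theorem~\ref{thm:mcqfa-cutpoint} with a compactness argument on the unit sphere of a finite-dimensional Hilbert space. Suppose, for a contradiction, that $L\in\sf MCL^{\neq}$ is non-empty and finite. By Theorem~\ref{thm:mcqfa-cutpoint}, $L$ is recognized by some MCQFA $\mathcal{M}=(Q,\Sigma,\{U_\sigma\},\ket{v_0},P)$ with exclusive cutpoint $0$, so $L=\{w\in\Sigma^*:f_\mathcal{M}(w)>0\}$. Pick a witness $w\in L$ and set $\ket{v}=U_{w_{|w|}}\cdots U_{w_1}\ket{v_0}$; by choice of $w$ we have $\|P_a\ket{v}\|^2>0$.

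The plan is to manufacture infinitely many distinct words $w'\in L$, contradicting finiteness. Fix any letter $\sigma\in\Sigma$ (the case $\Sigma=\varnothing$ being degenerate) and consider the orbit $\{\ket{v_k}=U_\sigma^{k}\ket{v}\}_{k\ge 0}$. It lies on the unit sphere of $\mathcal{H}_{|Q|}$, which is compact, so by Bolzano--Weierstrass there is a convergent subsequence $\ket{v_{k_j}}\to\ket{v^*}$. Since each $U_\sigma$ is unitary and therefore an isometry, applying $U_\sigma^{-k_j}$ preserves distances, giving
\[
\bigl\|\ket{v}-U_\sigma^{k_{j'}-k_j}\ket{v}\bigr\|=\bigl\|\ket{v_{k_j}}-\ket{v_{k_{j'}}}\bigr\|
\]
for any $j<j'$. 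Thus for arbitrary $\varepsilon>0$ there exist arbitrarily large exponents $m=k_{j'}-k_j$ such that $\|U_\sigma^{m}\ket{v}-\ket{v}\|<\varepsilon$. By continuity of the linear map $P_a$, choosing $\varepsilon$ small enough forces $\|P_a U_\sigma^{m}\ket{v}\|^2>0$, i.e.\ $f_\mathcal{M}(w\sigma^{m})>0$, placing $w\sigma^{m}$ in $L$. Since $m$ can be taken arbitrarily large, this supplies infinitely many distinct words in $L$, the desired contradiction.

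The only real subtlety I foresee is the passage from subsequential convergence of the orbit to the existence of near-identity action of $U_\sigma^{m}$ \emph{on the particular vector} $\ket{v}$; this is exactly where unitarity of $U_\sigma$ is essential (a non-isometric transition operator would not allow the translation of closeness across indices). Everything else — the reduction to exclusive cutpoint $0$ through Theorem~\ref{thm:mcqfa-cutpoint} and the continuity argument for $P_a$ — is routine.
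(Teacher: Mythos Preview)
Your argument is correct. The compactness step is the standard almost-periodicity argument for unitary dynamics, and the passage from a convergent subsequence to arbitrarily large near-return times via the isometry $U_\sigma^{-k_j}$ is exactly right; continuity of the bounded operator $P_a$ (or $P_aU_\dollar$ if an end-marker is present) then finishes the job.

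The paper's proof follows the same first step (reduction to exclusive cutpoint $0$ via Theorem~\ref{thm:mcqfa-cutpoint}, noting that for an MCQFA exclusive cutpoint $0$ is the same as ordinary cutpoint $0$), but for the second step it simply invokes the known fact \cite{BC01B} that $\sf MCL$ contains no non-empty finite languages. Your compactness argument is essentially the standard proof of that cited fact, so your route is not genuinely different, just more self-contained. One observation worth making: since ``$f_{\mathcal M}(w)\neq\lambda$'' is already an open condition, your compactness argument applies verbatim to an arbitrary exclusive cutpoint, so the detour through Theorem~\ref{thm:mcqfa-cutpoint} is not actually needed in your approach.
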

\begin{proof}
	By Theorem~\ref{thm:mcqfa-cutpoint}, any language in $ \sf MCL^{\neq} $ is defined by a MCQFA with exclusive cutpoint 0, which is, in turn,  a MCQFA with cutpoint 0. But, as was mentioned in Sect.~\ref{sec:card-unary}, MCQFAs define no finite languages except for the empty one \cite{BC01B}. 
\end{proof}

Corollary~\ref{cor:MCL-classes} and complementation in \eqref{eqneq} immediately imply

\begin{corollary}
	The classes $ \sf MCL^{\neq} $ and $ \sf MCL^= $ are proper subsets of $ \sf QAL^{\neq} $ and $ \sf QAL^{=} $, respectively. The same relations hold in the unary case.
\end{corollary}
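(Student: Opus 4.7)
The plan is to get both inclusions essentially for free and then use Corollary~\ref{cor:MCL-classes} together with a tiny regular-language witness to separate the classes. No new quantum machinery should be needed; the whole corollary is a bookkeeping consequence of what has already been proved.

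First I would verify the easy inclusions $\sf MCL^{\neq}\subseteq\sf QAL^{\neq}$ and $\sf MCL^{=}\subseteq\sf QAL^{=}$ (and the corresponding unary versions). This is immediate from the definitions: every MCQFA $\mathcal{M}=(Q,\Sigma,\{U_\sigma\},\ket{v_0},P)$ is a (general) QFA whose superoperators are the single-element sets $\mathcal{E}_\sigma=\{U_\sigma\}$, since a unitary operator is a trivial instance of a superoperator. The accepting probability $f_{\mathcal{M}}(w)$ is computed by exactly the same expression under either viewpoint, so the language recognized with inclusive or exclusive cutpoint $\lambda$ is unchanged. Hence every language in $\sf MCL^{\neq}$ (respectively $\sf MCL^{=}$) lies in $\sf QAL^{\neq}$ (respectively $\sf QAL^{=}$), and the argument is identical when all automata are unary.

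Next I would produce a concrete separator. By \eqref{eqneq} one has $\sf QAL^{\neq}=\sf S^{\neq}$ and $\sf QAL^{=}=\sf S^{=}$, and every regular language lies in both of these classes: given any DFA $\mathcal{A}$ for a regular $L$, interpret $\mathcal{A}$ as a PFA $\mathcal{P}$ with $\{0,1\}$-valued transition matrices and final vector; then $f_{\mathcal{P}}(w)\in\{0,1\}$ for every $w$, so $L=L(\mathcal{P},{\neq}\,0)$ and $\overline{L}=L(\mathcal{P},{=}\,0)$. In particular, the singleton language $\{a\}$ over any alphabet containing $a$ lies in $\sf QAL^{\neq}$ (and in the unary case it lies in $\sf UnaryQAL^{\neq}$). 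By Corollary~\ref{cor:MCL-classes}, however, $\{a\}$ is not in $\sf MCL^{\neq}$, since it is a non-empty finite language. This witnesses $\sf MCL^{\neq}\subsetneq\sf QAL^{\neq}$ and $\sf UnaryMCL^{\neq}\subsetneq\sf UnaryQAL^{\neq}$.

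Finally I would transfer the separation to the inclusive case via complementation. Since complementation is a bijection $\sf MCL^{\neq}\leftrightarrow\sf MCL^{=}$ and $\sf QAL^{\neq}\leftrightarrow\sf QAL^{=}$ (the same automaton with inclusive cutpoint $\lambda$ recognizes the complement of what it recognizes with exclusive cutpoint $\lambda$), the cofinite regular language $\overline{\{a\}}$ belongs to $\sf QAL^{=}$ but not to $\sf MCL^{=}$ — otherwise its complement $\{a\}$ would lie in $\sf MCL^{\neq}$, contradicting Corollary~\ref{cor:MCL-classes}. The same witness works in the unary setting. There is no real obstacle here: the only content beyond the previous corollary is choosing a witness regular language that is simultaneously non-empty and finite (so that Corollary~\ref{cor:MCL-classes} bites) and available over a unary alphabet (which $\{a\}$ trivially is).
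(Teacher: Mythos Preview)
Your proposal is correct and follows essentially the same approach as the paper, which simply says the corollary follows immediately from Corollary~\ref{cor:MCL-classes} and complementation via \eqref{eqneq}. You have merely made explicit the witness $\{a\}$ and the reason the inclusions $\sf MCL^{\neq}\subseteq QAL^{\neq}$, $\sf MCL^{=}\subseteq QAL^{=}$ hold, both of which the paper leaves implicit.
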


\subsection{Inclusive and exclusive languages with few states} \label{sec:ie-few}

Here we examine the inclusive and exclusive classes defined with very small number of states. We focus on the classes defined with inclusive cutpoint, since the classes for exclusive cutpoint can be then obtained by taking complements of languages.

We use the techniques from Sect.~\ref{sec:one-state-pseudo} in a straightforward way to characterize one-state inclusive pseudo stochastic languages.

For arbitrary $\alpha\in\mathbb R\cup\{+\infty\}$, $b_1,\ldots,b_n\in\mathbb R$, the \emph{equality solution language} $\Sol^{=}(\Sigma, b_1,\ldots, b_n,\alpha)$ is the Parikh-closed language whose Parikh set coincides with the set of all nonnegative integer solutions to the linear equation $(\vec b, \vec  x)=b_1x_1+\cdots +b_nx_n = \alpha$. Remark that equality solution languages can be non-regular; e.g., the equality $x_1-x_2 = 0$ generates the non-regular binary language $\EQ=\{w \in \{a,b\}^* \mid |w|_a = |w|_b \} $.

\begin{theorem} \label{psS1=}
For a fixed finite alphabet $\Sigma$, let $\bf\Lambda^=$ be the set of all languages of the form 
\begin{equation} \label{vee=}
\Sol^=(X, b_1,\ldots, b_{|X|},\alpha)\cap\Par(X,Y,i),
\end{equation}
where $Y\subseteq X\subseteq\Sigma$, $i\in\{0,1\}$. Then
\begin{equation} \label{ps=}
{\sf PseudoS}^=[1,\Sigma]={\bf\Lambda}^= \cup \{\Ind(\Sigma, X)\mid X\subseteq \Sigma\}.
\end{equation}
\end{theorem}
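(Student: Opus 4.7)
The plan is to follow the template of Theorem~\ref{psS1} and adapt it to the equality acceptance condition. As there, I reduce to the situation where the automaton is specified by an $n$-tuple $\vec{A}=(A_1,\ldots,A_n)$ of real transition numbers, a cutpoint $\lambda'=\lambda/(fv_0)$, and the sole acceptance condition
\[
A_1^{|w|_{a_1}}A_2^{|w|_{a_2}}\cdots A_n^{|w|_{a_n}} = \lambda'
\]
(having disposed of the trivial $v_0=0$ or $f=0$ cases exactly as in the cited proof). Let $X=\{a_i\mid A_i\ne 0\}$ and let $Y\subseteq X$ collect the letters with negative transition numbers.

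For the inclusion $\sf PseudoS^=[1,\Sigma]\subseteq\bf\Lambda^=\cup\{\Ind(\Sigma,X)\}$, I split on whether $\lambda'=0$. If $\lambda'=0$, then the product vanishes iff at least one factor $A_i^{x_i}$ with $A_i=0$ and $x_i\ge 1$ is present, equivalently iff $w$ contains some letter from $\Sigma\setminus X$; this yields $L=\Ind(\Sigma,\Sigma\setminus X)$, and as $X$ ranges over subsets of $\Sigma$ one obtains every language in the second family on the right-hand side of~\eqref{ps=}. If $\lambda'\ne 0$, no zero factor can appear, so $L\subseteq X^*$; on $X^*$ the product factors as
\[
(-1)^{\sum_{a_i\in Y}x_i}\cdot 2^{\sum_{a_i\in X} b_i x_i}, \quad b_i=\log_2|A_i|,
\]
so matching $\lambda'$ in magnitude and sign becomes the conjunction of the linear equation $\sum b_i x_i=\log_2|\lambda'|$ and a parity condition $\sum_{a_i\in Y}x_i\equiv i\pmod 2$ where $i\in\{0,1\}$ records the sign of $\lambda'$. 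These describe precisely an equality solution language intersected with a parity language, i.e.\ an element of $\bf\Lambda^=$.

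For the reverse inclusion I mimic the construction after formula~\eqref{Aj}: given $\Sol^=(X,b_1,\ldots,b_{|X|},\alpha)\cap\Par(X,Y,i)$ with $\alpha$ finite, I set $A_j$ as in~\eqref{Aj} (zero outside $X$, $\pm 2^{b_j}$ inside, with the sign dictated by $Y$), take $v_0=f=1$, and use the inclusive cutpoint $\lambda=(-1)^i\cdot 2^{\alpha}$; a direct check on the Parikh vector shows that the GFA accepts exactly this language. The case $\alpha=+\infty$ yields the empty language, which equals $\Ind(\Sigma,\varnothing)$ and is therefore already produced by the indicator construction. For a target $\Ind(\Sigma,Z)$, I set $A_j=0$ for $a_j\in Z$ and $A_j=1$ for $a_j\notin Z$, and use inclusive cutpoint $0$; then $f_{\cal G}(w)=0$ iff $w$ contains a letter of $Z$.

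No single step is a real obstacle; the only work is bookkeeping the degenerate cases (the empty alphabet $X$, the overlap between the two families when the language is $\varnothing$ or $\{\varepsilon\}$, and the convention $0^0=1$ for the empty word), all of which are handled exactly as in the proof of Theorem~\ref{psS1}. In particular, the argument contains no genuinely new idea beyond replacing the ``${<}$'' analysis of~\eqref{acccon2} with the ``${=}$'' analysis.
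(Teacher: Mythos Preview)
Your proof is correct and follows essentially the same approach as the paper's own proof: both adapt the argument of Theorem~\ref{psS1} by replacing the inequality in \eqref{acccon2} with an equality, splitting on whether the (normalized) cutpoint is zero to obtain indicator languages or nonzero to obtain an element of $\mathbf{\Lambda}^=$, and then inverting via the construction \eqref{Aj} with cutpoint $(-1)^i 2^{\alpha}$. Your write-up is in fact more explicit than the paper's (which is quite terse on the reverse inclusion and the degenerate cases), but there is no substantive difference in the route taken.
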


\begin{proof}
We adopt the proof of Theorem~\ref{psS1}. The analog of \eqref{acccon2}, representing the condition for accepting a word $w$ with the Parikh vector $\pi(w)=(x_1,\ldots,x_n)$ is 
\begin{equation}\label{acccon2=}
A_1^{x_1}\cdots A_n^{x_n} = \lambda .
\end{equation}
First, let $\lambda\ne0$ and let $X$ [resp., $Y$] denote the set of letters with nonzero [resp., negative] transition numbers. Then $w\in X^*$, and the accepted language is $\Sol^=(X, \log|A_1|,\ldots,\log|A_k|, \log|\lambda|)\cap \Par(X,Y,i)$, where $i=0$ [resp., $i=1$] for $\lambda>0$ [resp., $\lambda<0$]. Thus, any accepted language is given by \eqref{vee=}. Conversely, for any language \eqref{vee=} we apply \eqref{Aj} to build the corresponding GFA.

Now let $\lambda=0$. Then $w$ contains a letter with zero transition number and hence belong to an indicator language. The converse is also trivial, so we get \eqref{ps=}.
\end{proof}

\begin{corollary} The class $ \mathsf{UnaryPseudoS^{=}[1]} $ consists of the languages $ \varnothing $, $ a^* $, $ a^+ $, $ \even  $, $ \overline{\even} $, and $ \{a^n\} $ for $ n \geq 0 $. In particular, $ \mathsf{UnaryPseudoS^{=}[1]} \ne \mathsf{UnaryPseudoS^{\ne}[1]}$.
\end{corollary}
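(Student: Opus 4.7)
The plan is to instantiate Theorem~\ref{psS1=} with $\Sigma=\{a\}$ and exhaust all admissible tuples $(X,Y,i,b,\alpha)$ and $\Ind(\Sigma,X)$, collecting the resulting languages and then checking that they coincide with the list in the corollary. The proof is combinatorial bookkeeping.

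First I would handle the indicator piece. Since $X\subseteq\{a\}$, one has $\Ind(\{a\},\varnothing)=\varnothing$ and $\Ind(\{a\},\{a\})=a^+$. These two languages already contribute $\varnothing$ and $a^+$ to the list. Next I would split the analysis of \eqref{vee=} by the cardinality of $X$. If $X=\varnothing$, the only word is $\varepsilon$, and $\Sol^=(\varnothing,\alpha)$ is $\{\varepsilon\}$ when $\alpha=0$ and $\varnothing$ otherwise, while $\Par(\varnothing,\varnothing,i)$ equals $\{\varepsilon\}$ for $i=0$ and $\varnothing$ for $i=1$; the intersection is either $\{\varepsilon\}=\{a^0\}$ or $\varnothing$.

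If $X=\{a\}$ I would compute the solution language first: $\Sol^=(\{a\},b,\alpha)$ consists of the words $a^x$ with $bx=\alpha$, so it is $a^*$ (when $b=\alpha=0$), $\varnothing$ (when $b=0,\alpha\ne0$, or when $\alpha=+\infty$, or when $\alpha/b$ is not a nonnegative integer), or some singleton $\{a^n\}$ (when $\alpha/b=n\in\mathbb{Z}_{\ge0}$). Then I would intersect with the parity language, using $\Par(\{a\},\varnothing,0)=a^*$, $\Par(\{a\},\varnothing,1)=\varnothing$, $\Par(\{a\},\{a\},0)=\even$, $\Par(\{a\},\{a\},1)=\overline{\even}$. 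Only four outcomes survive intersection: $\varnothing$, $\{a^n\}$, $\even$, $\overline{\even}$ (the last two coming from $\Sol^==a^*$ intersected with the nontrivial parity languages; singletons survive only when the parity matches, which still yields a singleton). Unioning every case gives exactly $\varnothing$, $a^*$, $a^+$, $\even$, $\overline{\even}$, and $\{a^n\}$ for $n\ge 0$, as required. Conversely, each of these languages actually arises (e.g. $a^*$ from $b=\alpha=0$, $\even$ from $X=Y=\{a\}$, $b=\alpha=0$, $i=0$, and $\{a^n\}$ from $b=1$, $\alpha=n$), so the list is tight.

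For the separation ${\sf UnaryPseudoS}^{=}[1]\ne{\sf UnaryPseudoS}^{\ne}[1]$, I would observe that exclusive-cutpoint languages are exactly complements of inclusive-cutpoint ones, so it suffices to exhibit one member of the equality list whose complement is not in the list. The singleton $\{a\}=\{a^1\}$ works: its complement $\{a^i\mid i\ne 1\}$ is infinite, contains $\varepsilon$ and $a^2$, so it cannot equal $\varnothing$, $a^+$, $\even$, $\overline{\even}$, or any singleton; and it misses $a$, so it is not $a^*$. Hence it lies in ${\sf UnaryPseudoS}^{\ne}[1]$ but not in ${\sf UnaryPseudoS}^{=}[1]$.

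The routine obstacle is keeping the bookkeeping straight between singleton and empty outcomes of $\Sol^=$ and making sure no degenerate combination accidentally generates a language outside the target list (in particular ruling out the empty alphabet case yielding anything but $\varnothing$ or $\{\varepsilon\}$). The harder qualitative step is the clean separation argument, and the choice of $\{a\}$ as the witness is convenient because its complement is neither finite nor cofinite-with-a-periodicity-pattern, which lets us reject every item of the equality list in a single line.
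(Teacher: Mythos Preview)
Your proof is correct and follows the same approach as the paper, namely instantiating Theorem~\ref{psS1=} for $\Sigma=\{a\}$ and enumerating all admissible choices of $X,Y,i,b,\alpha$; you also supply an explicit separation witness for ${\sf UnaryPseudoS}^{=}[1]\ne{\sf UnaryPseudoS}^{\ne}[1]$, which the paper leaves to the reader. One small slip: in the $X=\{a\}$ paragraph you list only ``four outcomes'' of the intersection and omit $a^*$ itself (arising from $\Sol^{=}=a^*$ intersected with $\Par(\{a\},\varnothing,0)=a^*$), but since you explicitly exhibit $a^*$ in the converse direction the argument stands.
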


\begin{proof}
The unary equality solution language $\Sol^=(\{a\},b,\alpha)$ equals $\{a^n\}$ or $\varnothing$ if $b\ne0$; $a^*$ if $b=\alpha=0$;  $\varnothing$ if $b=0$, $\alpha\ne 0$. Since $\Par(\{a\},\{a\},0)=\even$, $\Par(\{a\},\{a\},1)=\overline{\even}$, $\Par(\{a\},\varnothing,0)=a^*$, $\Par(\{a\},\varnothing,1)=\Ind(\{a\},\varnothing)=\varnothing$, and $\Ind(\{a\},\{a\})=a^+$, we get the required list from \eqref{vee=}, \eqref{ps=}.
\end{proof}

The behaviour of 2-state PFAs on unary alphabet is examined in the proof of Theorem \ref{thm:paz-2-state-pfa}. Then, in a straightforward way, we can list all unary languages defined by these PFAs, arriving at the following analog of Corollary~\ref{ps1s2}.

\begin{corollary} 
$ \sf UnaryPseudoS^=[1] = UnaryS^=[2] $.
\end{corollary}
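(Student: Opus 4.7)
The plan is to revisit the case analysis in the proof of Theorem~\ref{thm:paz-2-state-pfa} and catalogue, in each case, the languages $L(\mathcal{P},{=}\lambda)$ realised by a 2-state unary PFA $\mathcal{P}$ with inclusive cutpoint $\lambda$. That analysis splits the parameter space into four regimes: constant accepting probability (either $x=y=0$ or $f_1=f_2$), period-two swap ($x=y=1$), transient-then-constant ($x+y=1$), and the general regime where $f_{\mathcal{P}}(a^m)=z+rt^m$ with $r\ne 0$ and $t\in(-1,1)\setminus\{0\}$. The target is to show that the union of resulting languages matches the list $\varnothing,a^*,a^+,\even,\overline{\even},\{a^n\}$ ($n\ge 0$) already established for $\sf UnaryPseudoS^=[1]$.

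Three regimes are immediate. The constant regime gives $\varnothing$ or $a^*$; the swap regime additionally gives $\even$ and $\overline{\even}$ when $\lambda$ coincides with exactly one of $fv_0$ and $fv_1$; the $x+y=1$ regime contributes $\{\varepsilon\}=\{a^0\}$ and $a^+$, since the stationary tail value for $m\ge 1$ may differ from $fv_0$. The key step is the general regime: if $f_{\mathcal{P}}(a^{m_1})=f_{\mathcal{P}}(a^{m_2})=\lambda$, then $rt^{m_1}=rt^{m_2}$, forcing $t^{m_1}=t^{m_2}$. Since $|t|<1$ and $t\ne 0$, both the absolute values and the signs of the two sides must match, so $m_1=m_2$. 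Hence in this regime $L(\mathcal{P},{=}\lambda)$ is either $\varnothing$ or a singleton $\{a^n\}$. Taking the union over all regimes recovers exactly the preceding corollary's list, giving the inclusion $\sf UnaryS^=[2]\subseteq UnaryPseudoS^=[1]$.

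For the reverse inclusion I will exhibit a concrete 2-state PFA with inclusive cutpoint for each language on the list: the trivial automaton for $\varnothing$ and $a^*$; the swap automaton ($x=y=1$) with $\lambda$ set to one of the two alternating values for $\even$ and $\overline{\even}$; an $x+y=1$ automaton with $fv_0\ne f_1y+f_2x$, placing $\lambda$ at $fv_0$ for $\{\varepsilon\}$ or at the tail value for $a^+$; and for each $n\ge 0$ a general-regime automaton (e.g.\ with $0<x+y<1$, $f_1\ne f_2$, $c\ne 0$) together with the cutpoint $\lambda:=z+rt^n$, which by the uniqueness argument above is hit exactly at $m=n$. The main subtlety is the sign handling in the general-regime uniqueness argument when $t<0$, where one must rule out coincidences between indices of differing parity; this is settled as soon as the equation $|t|^{m_1}=|t|^{m_2}$ is invoked, which collapses both parity classes at once.
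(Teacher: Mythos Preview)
Your proposal is correct and follows exactly the approach the paper sketches: the paper's own proof is a one-liner that says to revisit the case analysis of Theorem~\ref{thm:paz-2-state-pfa} and read off the inclusive-cutpoint languages, and you have carried this out in detail. One tiny omission: your ``constant regime'' is described as ``either $x=y=0$ or $f_1=f_2$'', but there is also the subcase $c=0$ (initial vector at the stationary distribution) with $f_1\neq f_2$ and $x+y\in(0,2)$, which likewise yields constant accepting probability; since this again contributes only $\varnothing$ or $a^*$, the final list is unaffected.
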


On the other hand, $\sf UnaryMCL^=[2]$ is incomparable with $ \sf UnaryPseudoS^=[1]$. Indeed, let $n>2$ be an integer. A 2-state MCQFA can start in state $ \ket{q_1} $, make a rotation with the angle $ \frac{\pi}{n} $ for each $a$ and accept the input if the state $ q_1 $ is observed \cite{AF98}. This MCQFA defines, with cutpoint 1, the language $ (a^n)^* $ which is not a member of $ \sf UnaryPseudoS^=[1] $. On the other hand, we know from Corollary~\ref{cor:MCL-classes} that no MCQFA can define $\{\varepsilon\}$ with an exclusive cutpoint, and hence $a^+\notin \sf UnaryMCL^=$. 

\smallskip
We close this section a couple of observations. While the class $\sf PseudoS^=$ is stable with respect to fixing the cutpoint to any particular number, its subclass $\sf PseudoS^=[1]$ ``discriminates'' the cutpoint 0: Theorem~\ref{psS1=} says that only indicator languages can be recognized with this inclusive cutpoint.

On the other hand, 2-state MCQFAs can define some binary non-regular languages with inclusive cutpoint 0, e.g. $\EQ$ \cite{BC01B,BP02}. If we allow left end-markers, then $ \EQ $ can be defined with any inclusive cutpoint. So, the phenomenon of ``discrimination'' can be quite complicated and deserves further attention. 

\bibliographystyle{plain}

\bibliography{tcs}

\newpage

\end{document}